\newcommand{\CTM}{\approx_{CT}}
\newcommand{\nCTM}{\not\approx_{CT}}
\newcommand{\tauCTM}{\overset{\tau}{\approx}_{CT}}
\newcommand{\PDD}{\overrightarrow{PD}}
\newcommand{\PDG}{\overleftarrow{PD}}
\newcommand{\aD}{\overrightarrow{a_x}}
\newcommand{\bD}{\overrightarrow{b_x}}
\newcommand{\cD}{\overrightarrow{a_y}}
\newcommand{\dD}{\overrightarrow{b_y}}
\newcommand{\aG}{\overleftarrow{a_x}}
\newcommand{\bG}{\overleftarrow{b_x}}
\newcommand{\cG}{\overleftarrow{a_y}}
\newcommand{\dG}{\overleftarrow{b_y}}
\newcolumntype{P}[1]{>{\centering\arraybackslash}p{#1}}
\begin{document}
\title{Approximate Cartesian Tree Matching: an Approach Using Swaps}
%
%\titlerunning{Abbreviated paper title}
% If the paper title is too long for the running head, you can set
% an abbreviated paper title here
%
\author{Bastien Auvray\inst{1} \and
Julien David\inst{1,2} \and
Richard Groult\inst{1,3} \and
Thierry Lecroq\inst{1,3}} %\orcidID{0000-0002-1900-3397}
\authorrunning{B. Auvray et al.}
% First names are abbreviated in the running head.
% If there are more than two authors, 'et al.' is used.
%
\institute{CNRS NormaSTIC FR 3638, Caen, Le Havre, Rouen, France \and
Normandie University, UNICAEN, ENSICAEN, CNRS, GREYC,
Caen, France. \and
Univ Rouen Normandie, INSA Rouen Normandie, Université Le Havre Normandie, Normandie Univ, LITIS UR 4108, F-76000 Rouen, France}
\maketitle              % typeset the header of the contribution
\begin{abstract}
Cartesian tree pattern matching consists of finding all the factors of a text that have
 the same Cartesian tree than a given pattern.
There already exist theoretical and practical solutions for the exact case.
In this paper, we propose the first algorithm for solving approximate Cartesian tree
 pattern matching.
We consider Cartesian tree pattern matching with one swap: given a pattern of length $m$
 and a text of length $n$
 we present two algorithms that find all the factors of the text that have the same Cartesian tree of
 the pattern after one transposition of two adjacent symbols.
The first algorithm uses a characterization of a linear representation of the Cartesian trees
 called parent-distance after one swap and runs in time $\Theta(mn)$ using $\Theta(m)$ space.
The second algorithm generates all the parent-distance tables of sequences
 that have the same Cartesian tree than the pattern after one swap.
It runs in time $\mathcal{O}((m^2 + n)\log{m})$
 and has $\mathcal{O}(m^2)$ space complexity.
\keywords{Cartesian tree  \and Approximate pattern matching  \and Swap \and Transposition.}
\end{abstract}

\section{Introduction}

In general terms, the pattern matching problem consists of finding one or all
 the occurrences of a pattern $p$ of length $m$ in a text $t$ of length $n$.
When both the pattern and the text are strings the problem has been extensively
 studied and has received a huge number of solutions~\cite{FL2013}.
Searching time series or list of values for patterns representing specific
 fluctuations of the values requires a redefinition of the notion of pattern.
The question is to deal with the recognition of peaks, breakdowns, or more features.
For those specific needs one can use the notion of Cartesian tree. 

Cartesian trees have been introduced by Vuillemin in 1980~\cite{vuillemin80}.
They are mainly associated to strings of numbers and are structured as heaps
 from which original strings can be recovered by symmetrical traversals of the trees.
It has been shown that they are connected to Lyndon trees~\cite{CROCHEMORE20201},
 to Range Minimum Queries~\cite{demaine09}
 or to parallel suffix tree construction~\cite{SB14}.
Recently, Park \textit{et al.}~\cite{PALP19} introduced a new metric of generalized matching,
 called Cartesian tree matching.
It is the problem of finding every factor of a text $t$ which has the same
 Cartesian tree as that of a given pattern $p$.
Cartesian tree matching can be applied, for instance, to finding patterns in time series
 such as share prices in stock markets or gene sample time data.

Park \textit{et al.} introduced the parent-distance representation which is a
 linear form of the Cartesian tree and that has a one-to-one mapping with Cartesian trees.
They gave linear-time solutions for single and multiple pattern Cartesian tree matching,
 utilizing this parent-distance representation and existing classical string algorithms,
 i.e., Knuth-Morris-Pratt~\cite{KMP77} and Aho-Corasick~\cite{AC75} algorithms.
More efficient solutions for practical cases were given in~\cite{SGRFLP21}.
Recently, new results on Cartesian pattern matching appeared~\cite{PARK2020,KC2021,NFNI2021,FLPS22}.

All these previous works on Cartesian tree matching are concerned with finding exact occurrences
 of patterns consisting of contiguous symbols.
The only results known on non-contiguous symbols presents an algorithm for episode matching~\cite{OizumiKMIA22}
 (finding all minimal length factors of $t$ that contain $p$ as a subsequence) in Cartesian
 tree framework.

To the best of our knowledge, no result is known about approximate pattern matching in
 this Cartesian tree framework.
However in real life applications data are often noisy and it is thus important to
 find factors of the text that are similar, to some extent, to the pattern.
In this paper, we present the first results in this setting by considering
 approximate Cartesian tree pattern matching with one transposition
 (aka swap) of one symbol with the adjacent symbol.
Swap pattern matching has received a lot of attention in classical sequences
 since the first paper in 1997~\cite{AmirALLL97}
 (see~\cite{FaroP18} and references therein).
Swaps are common in real life data and it seems natural to consider them in the
 Cartesian pattern matching framework.
We are able to design two algorithms for solving the Cartesian tree pattern matching
 with at most one swap.
The first one runs in time $\Theta(mn)$ and uses a characterization of a linear representation
 of Cartesian trees while the second one runs in $\mathcal{O}((m^2 + n)\log{m} )$ and uses a graph to generate
 all the linear representations of Cartesian trees of sequences that match the pattern after one swap.

The remaining of the article is organized as follows: Sect.~\ref{sec:pre}
 presents the basic notions and notations used in the paper.
Given a sequence $x$, 
 in Sect.~\ref{sec:charac} we give a characterization of the parent-distance representations
 of Cartesian trees that correspond to sequences $x$ after one swap.
Sect.~\ref{sec:graph} presents the swap graph where vertices are Cartesian trees
 and there is an edge between two vertices if both Cartesian trees can be obtained from the other using one swap. 
In Sect.~\ref{sec:algo} we give our two algorithms for Cartesian tree pattern matching with swaps.
Sect.~\ref{sec:persp} contains our perspectives.

\section{Preliminaries \label{sec:pre}}

\subsection{Basic Notations}

We consider sequences of integers with a total order denoted by $<$. 
For a given sequence $x$, $\vert x\vert$ denotes the length of $x$, $x[i]$ is the $i$-th element of $x$ and $x[i\ldots j]$ represents the factor of $x$ starting at the $i$-th element and ending at the $j$-th element. 
For simplicity, we assume all elements in a given sequence to be distinct and numbered from 1 to $\vert x\vert$.

\subsection{Cartesian Tree Matching}

Given a sequence $x$ of length $n$, its Cartesian tree $C(x)$ is recursively defined as follows (see example~\figurename~\ref{fig:parent}):
\begin{itemize}
    \item if $x$ is empty, then $C(x)$ is the empty tree;
    \item if $x[1\ldots n]$ is not empty and $x[i]$ is the smallest value of $x$, $C(x)$ is the Cartesian tree with $i$ as its root, the Cartesian tree of $x[1\ldots i-1]$ as the left subtree and the Cartesian tree of $x[i+1\ldots n]$ as the right subtree.
\end{itemize}

\begin{figure}[h!]
\begin{center}
\includegraphics[scale=0.8]{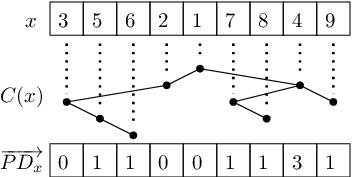}
\caption{
A sequence $z = (4, 5, 6, 2, 1, 7, 8, 3, 9) $, its Cartesian tree $C(z)$ and its corresponding
 parent-distance table $\PDD_z$.
}\label{fig:parent}    
\end{center}
\end{figure}
We will denote by $x \CTM y$ if sequences $x$ and $y$ share the same Cartesian tree. 
For example, $ z = (4, 5, 6, 2, 1, 7, 8, 3, 9) \CTM (3, 4, 8, 2, 1, 7, 9, 5, 6)$.

The Cartesian tree matching (CTM) problem consists in finding all factors of a text which share the same Cartesian tree as a pattern. 
Formally, Park \textit{et al.}~\cite{PALP19} define it as follows:

\begin{definition}[Cartesian tree matching] \label{ctm}
Given two sequences $p[1\ldots m]$ and $t[1\ldots n]$, find every $1 \le i \le n - m + 1$ such that $t[i\ldots i+m-1] \CTM p[1\ldots m]$.
\end{definition}

In order to solve CTM without building every possible Cartesian tree, an efficient representation of these trees was introduced by Park \textit{et al.}~\cite{PALP19}, the parent-distance representation (see example \figurename~\ref{fig:parent}):

\begin{definition}[Parent-distance representation] \label{def:pd}
Given a sequence $x[1\ldots n]$, the parent-distance representation of $x$ is an integer sequence $\PDD_x[1\ldots n]$, which is defined as follows:
\begin{align*}
\PDD_x[i] = 
\begin{cases*}
i - max_{1\leq j<i}\{j\ |\ x[j] < x[i]\}& \mbox{if such $j$ exists}\cr
0 & \mbox{otherwise}
\end{cases*}
\end{align*}
\end{definition}

Since the parent-distance representation has a one-to-one mapping with Cartesian trees, it can replace them without loss of information.

\subsection{Approximate Cartesian Tree Matching}

In order to define an approximate version of Cartesian tree matching, we use the following notion
of transposition on sequences: 

\begin{definition}[Swap]\label{def:swap}
Let $x$ and $y$ be two sequences of length $n$, and $i \in \{1,\ldots, n-1\}$, we denote $y = \tau(x,i) $ to describe a swap, that is:
$$y = \tau(x,i) \text{ if }\begin{cases}
x[j] = y[j], \forall j \notin \{i,i+1\}\\
x[i] = y[i+1] \\
x[i+1] = y[i]
\end{cases}
$$
\end{definition}
This kind of transposition is the one made by the Bubble Sort algorithm.
It is therefore a natural operation on permutations and sequences.
For the Cartesian tree point of view, see~\figurename\ref{fig:transposition}.
We use the notion of swap to define the approximate Cartesian tree matching.

\begin{definition}[$CT_\tau$ Matching] \label{ct-tau}
Let $x$ and $y$ be two sequences of length $n$, we have $x \tauCTM y$ if:
$$ \begin{cases} x \CTM y \text{, or}\\ 
%\exists\ z, \exists\ i\in\{1,\ldots, n-1\},\ (z = \tau(x, i) \text{ and }\ z \CTM y) \text{ or } (z = \tau(y, i) \text{ and }\ z \CTM x)
\exists\ x',\ y', \exists\ i\in\{1,\ldots, n-1\}, x' \CTM x, y' \CTM y,  x' = \tau(y', i)\ and\ y' = \tau(x', i)\end{cases}$$
\end{definition}

\figurename~\ref{fig:transposition} shows an example of sequences that $CT_\tau$ 
 match.

\begin{figure}
    \centering
    \includegraphics{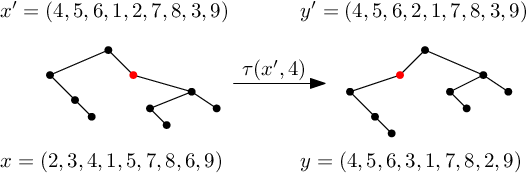}
    \caption{The sequence $x$ $CT_\tau$ matches $y$. 
    A swap at position $4$ moves the red node from the right subtree of the root to the left one. 
    In general, a swap at position $i$ consists either in moving
    the leftmost descendant of the right subtree to \textbf{a} rightmost position in 
    the left subtree (that is if $x[i] < x[i+1]$), or the opposite, in moving
    the rightmost descendant of the left subtree to
    \textbf{a} leftmost position of the right subtree of its parent.
    Note that we also have $x \tauCTM y'$, $x' \tauCTM y$ and of course $x' \tauCTM y'$.}
    \label{fig:transposition}
\end{figure}

Lastly, in order to fully characterize the approximate Cartesian tree matching, we introduce the notion of reverse parent-distance of a sequence that we compute as if read from right to left.
 
\begin{definition}[Reverse parent-distance] \label{rev-pd}
Given a sequence $x[1\ldots n]$, the reverse parent-distance representation of $x$ is an integer sequence $\PDG_x[1\ldots n]$, which is defined as follows:
$$\PDG_x[i] = \begin{cases}
min_{i < j \leq n}\{j\ |\ x[i] > x[j]\} - i & if\ such\ j\ exists\\
0 & otherwise
\end{cases}
$$
\end{definition}

\section{Characterization of the Parent-Distance Tables when a Swap Occurs \label{sec:charac}}

In this section, we describe how the parent-distances $\PDD_x$ and $\PDG_x$ of a sequence $x$ of length $n$ are modified into tables $\PDD_y$ and $\PDG_y$ when $y \tauCTM x$, with a swap occurring at position $i$.
\figurename~\ref{fig:zones} sums up the different parts of the parent-distance tables we are going to characterize. 
Those results will be used in Section~\ref{sec:algo} to obtain an algorithm that solves the $CT_\tau$ matching problem.

\begin{figure}[h!]
\begin{center}
\includegraphics[scale=0.3]{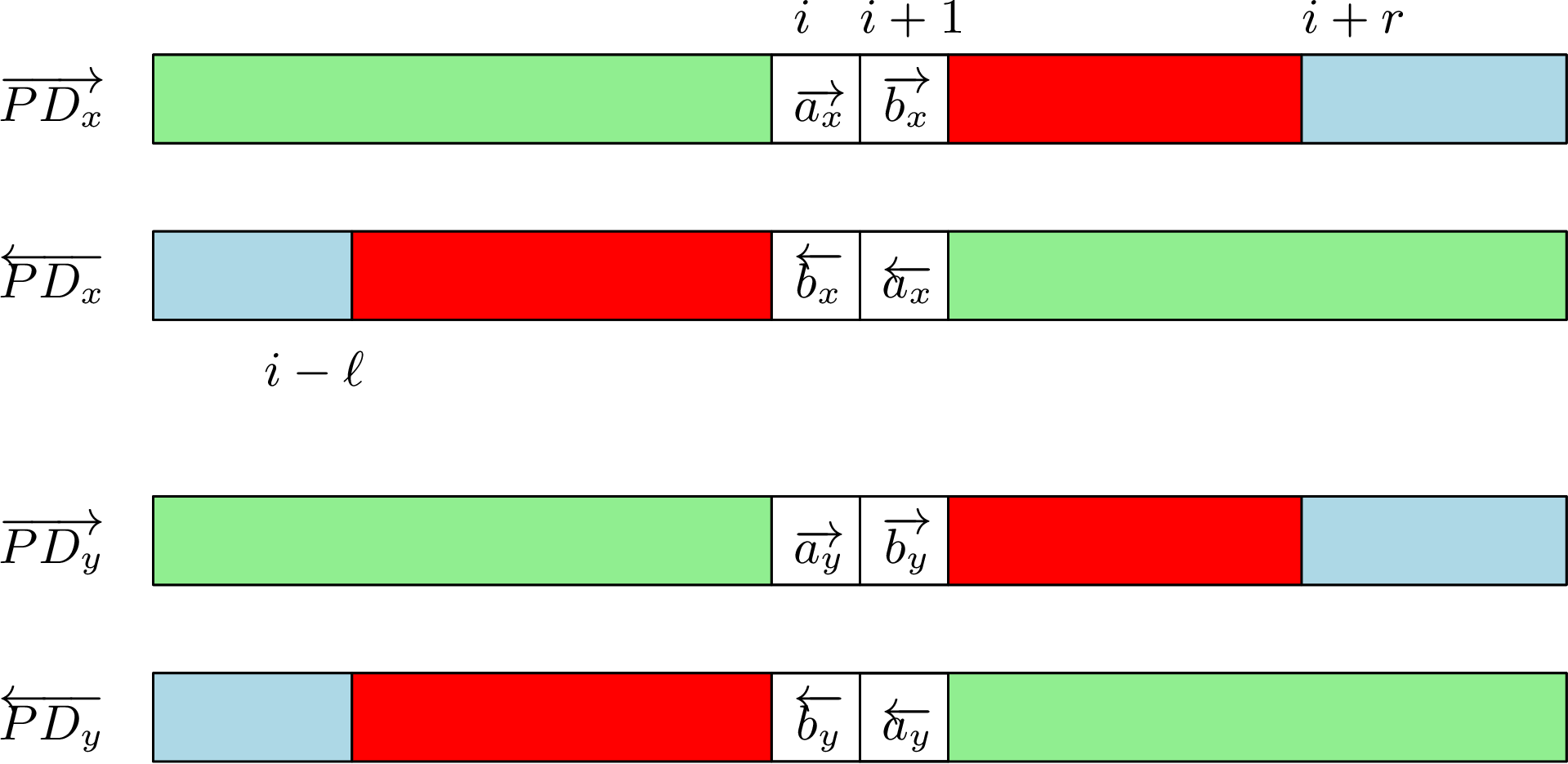}
\caption{This figure sums up the different Lemmas of this section. For instance, the green zones correspond to Def.~\ref{def:green} and Lemma~\ref{lm:green}. The values $\overrightarrow{a_x}$, $\overrightarrow{b_x}$, $\ldots$, are the $8$ values found in the parent-distance tables of $x$ and $y$ at position $i$ and $i+1$, that is $\overrightarrow{PD_x}[i] = \overrightarrow{a_x}$, $\overrightarrow{PD_x}[i+1] = \overrightarrow{b_x}$, $\ldots$ Values $i-\ell$ and $i+r$ respectively denote the last and first position of each blue zone.}\label{fig:zones}
\end{center}
\end{figure}

First, we describe how the parent-distances are modified at positions $i$ and $i+1$.
\begin{lemma}\label{lm:lesser}
Suppose that $x[i] < x[i+1]$, then the following properties hold:
\begin{enumerate}
    \item\label{lm:lesser:item1} $\dG = 1$
    \item\label{lm:lesser:item2} $\dD = \begin{cases}0 &\text{if }\aD = 0 \\ \aD + 1 & \text{otherwise}\end{cases}$
    \item\label{lm:lesser:item3} $\cG = \begin{cases}0 &\text{if } \bG = 0\\ \bG - 1 & \text{otherwise}\end{cases}$
    \item\label{lm:lesser:item4} $\cD \leq \begin{cases} i-1 & \text{if } \aD = 0\\ \aD & \text{otherwise}\end{cases}$
\end{enumerate}
\end{lemma}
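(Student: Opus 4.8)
The plan is to fix the swap $y=\tau(x,i)$ of Definition~\ref{def:swap} under the hypothesis $x[i]<x[i+1]$ and to record the two facts that drive every case: outside the swapped positions the sequences coincide, $y[j]=x[j]$ for all $j\notin\{i,i+1\}$, while at the swapped positions $y[i]=x[i+1]>x[i]=y[i+1]$. Each of the four claims is then obtained by unfolding Definition~\ref{def:pd} or Definition~\ref{rev-pd} at position $i$ or $i+1$ of $y$ and reducing the nearest-smaller search to the unchanged part of the sequence, where the values of $\PDD_x$ and $\PDG_x$ reappear shifted by one position.

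For item~\ref{lm:lesser:item1} the nearest smaller value to the right of $y[i]=x[i+1]$ is its immediate neighbour $y[i+1]=x[i]$, so $\dG=1$ with no further search. For item~\ref{lm:lesser:item2} I would first note that the immediate left neighbour of $y[i+1]=x[i]$ is $y[i]=x[i+1]$, which is not smaller than $x[i]$; the search therefore skips position $i$ and enters the common prefix $y[1\ldots i-1]=x[1\ldots i-1]$, where it coincides with the search defining $\aD=\PDD_x[i]$, only measured from one position farther right. A case split on whether such a witness exists (that is, whether $\aD=0$) yields $\dD=0$ or $\dD=\aD+1$. Item~\ref{lm:lesser:item3} is the mirror argument on the reverse table: the search to the right of $y[i+1]=x[i]$ runs through the common suffix $y[i+2\ldots n]=x[i+2\ldots n]$, and since position $i+1$ already fails as a witness for $\bG=\PDG_x[i]$ (because $x[i+1]>x[i]$), the nonzero value of $\bG$ already points strictly beyond $i+1$; shifting by one gives $\cG=0$ or $\cG=\bG-1$.

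The only item producing an inequality rather than an identity is item~\ref{lm:lesser:item4}, which I expect to be the conceptual crux. Here $\cD=\PDD_y[i]$ searches to the left of $y[i]=x[i+1]$ for a value below the threshold $x[i+1]$, whereas $\aD=\PDD_x[i]$ searches the same region $x[1\ldots i-1]$ for a value below the strictly smaller threshold $x[i]$. Every witness for $\aD$ is thus a witness for $\cD$, so the nearest witness for $\cD$ can only be closer or equal, giving $\cD\le\aD$ when $\aD>0$. When $\aD=0$ no value lies below $x[i]$ in the prefix, yet values strictly between $x[i]$ and $x[i+1]$ may still occur there and the nearest such value can sit anywhere, leaving only the trivial bound $\cD\le i-1$. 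The main difficulty is therefore not computation but threshold bookkeeping: one must argue that the hypothesis $x[i]<x[i+1]$ is precisely what disqualifies position $i$ (respectively $i+1$) as a witness in items~\ref{lm:lesser:item2} and~\ref{lm:lesser:item3}, and that this same hypothesis is what turns the identity into an inequality in item~\ref{lm:lesser:item4}.
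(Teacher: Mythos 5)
Your proposal is correct and follows essentially the same route as the paper's proof: fix the swap, observe that everything outside positions $i,i+1$ is unchanged, and unfold Definitions~\ref{def:pd} and~\ref{rev-pd} at the swapped positions, with the case split on $\aD=0$ (resp.\ $\bG=0$) and the superset-of-witnesses argument for the inequality in item~\ref{lm:lesser:item4}. The only cosmetic difference is that the paper first records $\bD=1$ and $\bG\neq 1$ explicitly, facts you use implicitly when disqualifying positions $i$ and $i+1$ as witnesses.
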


\begin{proof}
Suppose $x[i] < x[i+1]$, we have $\bD = 1$ by definition of the parent-distance (Def.~\ref{def:pd}) and $\bG\neq1$ by definition of the reverse parent-distance
 (Def.~\ref{rev-pd}). Then, if a swap occurs at position $i$, $y[i] > y[i+1]$ and we have:
\begin{enumerate}
\item $\dG = 1$ by Def.~\ref{rev-pd}.

\item If $\aD = 0$,  $x[i]$ is the smallest element in $x[1\ldots i]$ by Def.~\ref{def:pd}. Which implies $y[i+1]$ is the smallest element in $y[1\ldots i+1]$ and thus $\dD = 0$ by Def.~\ref{def:pd}.

Otherwise, $x[i]$ (resp. $y[i+1]$) is not the smallest element in $x[1\ldots i]$ (resp. $y[1\ldots i+1]$). 
$y[i+1]$ has been pushed away from its parent in $y[1\ldots i-1]$ by one position compared to $x[i]$ and its parent in $x[1\ldots i-1]$. 
Thus, $\dD = \aD + 1$.

\item If $\bG = 0$, $x[i]$ is the smallest element in $x[i\ldots n]$ by Def.~\ref{rev-pd}. Which implies $y[i+1]$ is the smallest element in $y[i+1\ldots n]$, and thus $\cG = 0$ by Def.~\ref{rev-pd}.

Otherwise, $\bG > 1$ and $x[i]$ (resp. $y[i+1]$) is not the smallest element in $x[i\ldots n]$ (resp. $y[i+1\ldots n]$). 
$y[i+1]$ has been pushed closer to its parent in $y[i+2\ldots n]$ by one position when compared to $x[i]$ and its parent in $x[i+2\ldots n]$. 
Thus, $\cG = \bG - 1$.

\item If $\aD > 0$, that means there is an element smaller than $x[i]$ at position $i - \aD$ by Def.~\ref{def:pd}. After the swap, the parent-distance of $y[i]$ either refers to that same element at position $i - \aD$ or to a closer one that is smaller than $y[i]$ if such an element exists, and thus $\cD \leq \aD$.

Otherwise, the only information we have is $\cD \leq i - 1$ by Def.~\ref{def:pd}.
\end{enumerate}
\end{proof}

Note that in the item~\ref{lm:lesser:item4} of Lemma~\ref{lm:lesser} $\cD \leq \aD$, $\cD$ cannot take all values in $\{1, \ldots, \aD\}$,
since some won't produce a valid parent-distance table.

\begin{lemma}\label{lm:greater}
Suppose that $x[i] > x[i+1]$, then the following properties hold:
\begin{enumerate}
    \item\label{lm:greater:item1} $\dD = 1$
    \item\label{lm:greater:item2} $\dG = \begin{cases}0 &\text{if }\aG = 0 \\ \aG + 1 & \text{otherwise}\end{cases}$
    \item\label{lm:greater:item3} $\cD = \begin{cases}0 &\text{if } \bD = 0\\ \bD - 1 & \text{otherwise}\end{cases}$
    \item\label{lm:greater:item4} $\cG \leq \begin{cases} i-1 & \text{if } \aG = 0\\ \aG & \text{otherwise}\end{cases}$
\end{enumerate}
\end{lemma}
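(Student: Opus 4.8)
The plan is to avoid redoing the case analysis and instead observe that this statement is the mirror image of Lemma~\ref{lm:lesser} under reversal of the sequence, so that it can be obtained directly from that lemma. Write $x^{R}$ for the reversed sequence, $x^{R}[k]=x[n+1-k]$. Reversal exchanges the two parent-distance tables, $\PDD_{x^{R}}[k]=\PDG_x[n+1-k]$ and $\PDG_{x^{R}}[k]=\PDD_x[n+1-k]$, since ``nearest strictly smaller element to the left'' and ``to the right'' trade places and reflection preserves distances. A swap of $x$ at position $i$ (Def.~\ref{def:swap}) corresponds to a swap of $x^{R}$ at position $n-i$: if $y=\tau(x,i)$ then $y^{R}=\tau(x^{R},n-i)$. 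Finally the hypothesis $x[i]>x[i+1]$ becomes $x^{R}[n-i]<x^{R}[n-i+1]$, which is exactly the hypothesis of Lemma~\ref{lm:lesser} at position $i'=n-i$. So I would apply Lemma~\ref{lm:lesser} to $x^{R}$ at $i'$ and translate the four conclusions back through the dictionary above.

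Before translating, I would record the preliminary facts dual to those opening the proof of Lemma~\ref{lm:lesser}: since $x[i]>x[i+1]$, Def.~\ref{def:pd} gives $\bD\neq1$ and Def.~\ref{rev-pd} gives $\bG=1$. The translation is then mechanical. Item~\ref{lm:lesser:item1} for $x^{R}$ reads $\PDG_{y^{R}}[i']=1$, i.e.\ $\PDD_y[i+1]=\dD=1$, giving item~\ref{lm:greater:item1}. Item~\ref{lm:lesser:item2} relates $\PDD_{y^{R}}[i'+1]$ to $\PDD_{x^{R}}[i']$, i.e.\ $\dG$ to $\aG$, giving the $0\,/\,\aG+1$ dichotomy of item~\ref{lm:greater:item2}. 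Item~\ref{lm:lesser:item3} relates $\PDG_{y^{R}}[i'+1]$ to $\PDG_{x^{R}}[i']$, i.e.\ $\cD$ to $\bD$, giving item~\ref{lm:greater:item3}. And item~\ref{lm:lesser:item4} relates $\PDD_{y^{R}}[i']$ to $\PDD_{x^{R}}[i']$, i.e.\ $\cG$ to $\aG$, giving item~\ref{lm:greater:item4}.

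If a self-contained argument is preferred, I would instead mirror the proof of Lemma~\ref{lm:lesser} directly, using $y[i]=x[i+1]$, $y[i+1]=x[i]$ and $y[i]<y[i+1]$: items~\ref{lm:greater:item1}--\ref{lm:greater:item3} follow by tracking how the nearest strictly smaller element on the relevant side is displaced by exactly one position after the swap (the element read away from the swap moves one step farther, the one read toward it moves one step closer, and the ``$0$'' branches record when no such element survives). I expect item~\ref{lm:greater:item4} to be the only delicate point, for the same reason flagged after Lemma~\ref{lm:lesser}: the element $y[i+1]=x[i]$ swapped toward the right may acquire a \emph{closer} smaller neighbour, so only the inequality $\cG\leq\aG$ (when $\aG>0$) is forced, not an equality, and in the case $\aG=0$ the only available bound is the largest reverse parent-distance possible at position $i+1$. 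Since carrying item~\ref{lm:lesser:item4} through the reversal turns its fallback bound $i'-1$ into $n-i-1$, I would double-check the constant stated in the $\aG=0$ branch against this value.
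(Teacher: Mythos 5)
Your reduction to Lemma~\ref{lm:lesser} by reversal is correct, and it is a genuinely different route from the paper, whose entire proof of this lemma is the single sentence that it is ``similar to the one of Lemma~\ref{lm:lesser}'', i.e.\ an implicit re-run of the same case analysis with left and right exchanged. Your dictionary is the right one: with $x^{R}[k]=x[n+1-k]$ one has $\PDD_{x^{R}}[k]=\PDG_x[n+1-k]$ and $\PDG_{x^{R}}[k]=\PDD_x[n+1-k]$ (both definitions take the nearest \emph{strictly smaller} element on their respective side, so they trade places under reflection), and $y=\tau(x,i)$ gives $y^{R}=\tau(x^{R},n-i)$. Note that under the paper's labelling convention (visible in \figurename~\ref{fig:zones}) the reverse-table letters sit at the mirrored positions, $\aG=\PDG_x[i+1]$, $\bG=\PDG_x[i]$, $\cG=\PDG_y[i+1]$, $\dG=\PDG_y[i]$, which is exactly what makes your four translations land on items~\ref{lm:greater:item1}--\ref{lm:greater:item4} as you state. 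What the reduction buys, beyond avoiding a duplicated case analysis, is that it mechanically exposes the one place where the printed statement fails to be the mirror image of Lemma~\ref{lm:lesser}: your worry about the fallback constant in item~\ref{lm:greater:item4} is justified. Carrying item~\ref{lm:lesser:item4} through the reversal yields $\cG\le i'-1=n-i-1$ when $\aG=0$, which is also the trivial upper bound on $\PDG_y[i+1]$, whereas the printed bound $i-1$ is false in general: for $x=(3,1,4,2)$ and $i=1$ one has $x[1]>x[2]$, $\aG=\PDG_x[2]=0$, $y=(1,3,4,2)$ and $\cG=\PDG_y[2]=2>0=i-1$. So your argument proves the corrected statement (with $n-i-1$ in the $\aG=0$ branch); the discrepancy is an error in the lemma as stated, not a gap in your proof. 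If you write this up, the only step to make fully explicit is the verification of the two reversal identities for the parent-distance tables, which follows directly from Definitions~\ref{def:pd} and~\ref{rev-pd}.
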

The proof is similar to the one of Lemma~\ref{lm:lesser}.
In the following, we define the green and blue zones of the parent-distances tables, which are equal, meaning that they are unaffected by the swap. Also, we define the red zones whose values differ by at most $1$. Proofs can be found in the appendix. We strongly invite the reader to use \figurename{}~\ref{fig:zones} to get a better grasp of the definitions.

\begin{definition}[The green zones]\label{def:green}
Given a sequence $x$ and a position $i$, 
the green zone of $\PDD_x$ is $\PDD_x[1\ldots i-1]$ and the green zone of $\PDG_x$ is $\PDG_x[i+2\ldots n]$.
\end{definition}

\begin{restatable}[The green zones]{lemma}{greenzones}
\label{lm:green}
The green zones of $\PDD_x$ and $\PDD_y$ (resp. $\PDG_x$ and $\PDG_y$) are equal.
\end{restatable}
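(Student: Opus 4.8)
The plan is to exploit the \emph{one-sided locality} of the two representations: by Definition~\ref{def:pd} the value $\PDD_x[j]$ is determined entirely by the prefix $x[1\ldots j]$ (it is either $0$, or $j$ minus the largest index $k<j$ with $x[k]<x[j]$), and symmetrically by Definition~\ref{rev-pd} the value $\PDG_x[j]$ is determined entirely by the suffix $x[j\ldots n]$. On the other hand, by Definition~\ref{def:swap} the swapped sequence $y=\tau(x,i)$ agrees with $x$ at every position outside $\{i,i+1\}$. The whole argument is then just a matter of checking that, in the green zones, the defining windows never reach the two swapped positions.

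First I would treat the $\PDD$ green zone. Fix any $j$ with $1\le j\le i-1$. Every index in $\{1,\ldots,j\}$ is at most $i-1$, hence lies outside $\{i,i+1\}$, so $x[k]=y[k]$ for all $k\le j$, i.e. $x[1\ldots j]=y[1\ldots j]$. Since $\PDD_x[j]$ and $\PDD_y[j]$ are computed from identical prefixes, Definition~\ref{def:pd} gives $\PDD_x[j]=\PDD_y[j]$, and therefore $\PDD_x[1\ldots i-1]=\PDD_y[1\ldots i-1]$.

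The $\PDG$ green zone is handled symmetrically. Fix any $j$ with $i+2\le j\le n$. Every index in $\{j,\ldots,n\}$ is at least $i+2$, hence again outside $\{i,i+1\}$, so $x[j\ldots n]=y[j\ldots n]$, and $\PDG_x[j]=\PDG_y[j]$ follows from Definition~\ref{rev-pd}. This yields $\PDG_x[i+2\ldots n]=\PDG_y[i+2\ldots n]$.

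There is no genuine obstacle here; the single point deserving care is the exact index range. The $\PDD$ green zone must stop at $i-1$ rather than extend further, because any position $\ge i$ can see the swapped pair inside its defining prefix and may thus change; dually, the $\PDG$ green zone must start at $i+2$ because any position $\le i+1$ can see the swapped pair inside its defining suffix. Demanding that the window stay strictly clear of $\{i,i+1\}$ is precisely what forces the bounds $i-1$ and $i+2$ of Definition~\ref{def:green}, and establishes the claim.
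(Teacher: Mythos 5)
Your proof is correct and is essentially the paper's argument: the paper derives Lemma~\ref{lm:green} from Lemma~\ref{lm:propzones}, whose proof for the positions $j<i$ and $j>i+1$ is exactly your prefix/suffix locality observation (that $\PDD_x[j]$ depends only on $x[1\ldots j]$ and $\PDG_x[j]$ only on $x[j\ldots n]$, both untouched by the swap in the green zones). You have simply inlined that reasoning rather than citing the auxiliary lemma.
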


%\begin{lemma}[The green zones]\label{lm:green}
%The green zones of $\PDD_x$ and $\PDD_y$ (resp. $\PDG_x$ and $\PDG_y$) are equal.
%\end{lemma}

\begin{definition}[The blue zones]\label{def:blue}
Given a sequence $x$ and a position $i$, the blue zone of 
$\PDD_x$ is $\PDD_x[i+r \ldots n]$ where:
$$
    r = 
    \begin{cases}
    \bG  &\text{if } x[i]<x[i+1] \text{ and } \bG > 1 \\  
    \aG+1  &\text{if } x[i] > x[i+1] \text{ and } \aG >0\\
    n-i+1 &\text{otherwise}
    \end{cases}
$$
The blue zone of  $\PDG_x$ is $\PDG_x[1 \ldots i-\ell]$ where:
$$
    \ell = 
    \begin{cases}
    \aD  &\text{if } x[i]<x[i+1] \text{ and } \aD>0 \\  
    \bD-1  &\text{if } x[i] > x[i+1] \text{ and } \bD>1\\
    i &\text{otherwise}
    \end{cases}
$$
\end{definition}

Note that in the last cases, the blue zones are empty.

\begin{restatable}[The blue zones]{lemma}{bluezones}
\label{lm:blue}
The blue zones of $\PDD_x$ and $\PDD_y$ (resp. $\PDG_x$ and $\PDG_y$) are equal.
\end{restatable}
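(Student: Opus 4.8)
The plan is to treat the two halves of the statement separately: first I would prove that the blue zone $\PDD_x[i+r\ldots n]$ of the forward table is left unchanged by the swap, and then deduce the corresponding claim for $\PDG$ by a symmetry argument, reading the sequence from right to left. Since a swap only exchanges the two values at positions $i$ and $i+1$, every entry $\PDD_z[k]$ with $k>i+1$ is determined by the values lying to the left of $k$ and their order, and this data is identical in $x$ and $y$ except possibly at positions $i$ and $i+1$. Hence $\PDD_x[k]$ and $\PDD_y[k]$ can differ only when the nearest smaller value to the left of $k$ (the parent of Def.~\ref{def:pd}) sits at position $i$ or $i+1$ in one of the two sequences; the whole argument then consists in showing that this never happens once $k\geq i+r$.

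The key object is the \emph{barrier} position $B$, defined as the index of the first element strictly smaller than $\min(x[i],x[i+1])$ occurring to the right of the swapped pair. I would first check that $B=i+r$ in every non-degenerate case: when $x[i]<x[i+1]$ the smaller value is $x[i]$ and the first smaller element to its right lies at distance $\bG=\PDG_x[i]$, giving $B=i+\bG=i+r$; when $x[i]>x[i+1]$ the smaller value is $x[i+1]$ and the first smaller element to its right lies at distance $\aG=\PDG_x[i+1]$, giving $B=(i+1)+\aG=i+r$. In particular, since $B=i+r\geq i+2$, the blue zone avoids positions $i$ and $i+1$ entirely. The remaining (``otherwise'') cases are exactly those in which no such barrier exists, where Def.~\ref{def:blue} sets $r=n-i+1$ and the blue zone is empty, so nothing is to be proven.

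The heart of the proof is then the following claim: for every $k\geq B$, neither position $i$ nor position $i+1$ is the nearest smaller-to-the-left element of $k$, in $x$ or in $y$. I would establish it by comparing $x[k]$ with $x[B]$. If $x[k]\leq x[B]$, then $x[k]$ is smaller than both swapped values, so positions $i,i+1$ are too large to serve as a parent; if $x[k]>x[B]$, then $B$ is itself a smaller element to the left of $k$ sitting at an index $>i+1$, so the nearest smaller element to the left of $k$ is found at some position $\geq B>i+1$ and again avoids $\{i,i+1\}$. Because the two swapped values thus never act as the parent of such a $k$, and because all entries outside positions $i$ and $i+1$ coincide in $x$ and $y$, the nearest smaller element to the left of $k$ — and hence its parent-distance — is identical in the two sequences, giving $\PDD_x[k]=\PDD_y[k]$ for all $k\geq i+r$.

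Finally, for $\PDG$ I would apply the same reasoning to the reversed sequence, for which the forward and reverse tables, the notions of left and right, and the two cases $x[i]<x[i+1]$ and $x[i]>x[i+1]$ are all interchanged; this turns the left barrier at position $i-\ell$ (namely $i-\aD$ when $x[i]<x[i+1]$, and $i+1-\bD$ when $x[i]>x[i+1]$) into the right barrier already treated, and yields the equality of the blue zones of $\PDG_x$ and $\PDG_y$. I expect the only delicate point to be the index bookkeeping that matches the barrier $B$ with the value $i+r$ (resp. $i-\ell$) prescribed by Def.~\ref{def:blue} in each sign case; once that identification is made, the parent-avoidance claim is uniform and the conclusion is immediate.
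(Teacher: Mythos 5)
Your proposal is correct and follows essentially the same route as the paper's proof: you identify the barrier position $i+r$ holding a value smaller than both swapped elements, then split on whether $x[k]$ is below the barrier value (parent lies left of $i$) or above it (parent lies at position $\geq i+r$), which is exactly the paper's argument via its auxiliary Lemma~\ref{lm:propzones} stating that $\PDD_x[k]=\PDD_y[k]$ whenever the parent-distance does not point to $i$ or $i+1$. The only detail the paper makes explicit that you leave implicit is that the blue zones of $x$ and $y$ occupy the same index range (via $\bG=\cG+1$ from Lemma~\ref{lm:lesser}), a minor bookkeeping point.
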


%\begin{lemma}[The blue zones]\label{lm:blue}
%The blue zones of $\PDD_x$ and $\PDD_y$ (resp. $\PDG_x$ and $\PDG_y$) are equal.
%\end{lemma}

\begin{definition}[The red zones]\label{def:red}
Given a sequence $x$ and a position $i$, if the blue zone of 
$\PDD_x$ is $\PDD_x[i+r \ldots n]$, then the right red zone is $\PDD_x[i+2 \ldots i+r-1]$.
Conversely, if the blue zone of $\PDG_x$ is $\PDG_x[1 \ldots i-\ell]$, then the left red zone is
$\PDG_x[i-\ell+1 \ldots i-1]$.
\end{definition}

\begin{restatable}[The red zones]{lemma}{redzones}
\label{lm:red}
We distinguish two symmetrical cases:
\begin{enumerate}
    \item $x[i] < x[i+1]$:
  \begin{enumerate}
    \item\label{lm:red:item1} 
  For each position $j$ in the red zone of $\PDD_x$, we have
  either $\PDD_y[j] = \PDD_x[j]$ or $\PDD_y[j] = \PDD_x[j]-1$.
    \item 
  For each position $j$ in the red zone of $\PDG_x$, we have
  either $\PDG_y[j] = \PDG_x[j]$ or $\PDG_y[j] = \PDG_x[j]+1$.
  \end{enumerate}
    \item $x[i] > x[i+1]$:
  \begin{enumerate}
    \item 
  For each position $j$ in the red zone of $\PDD_x$, we have
  either $\PDD_y[j] = \PDD_x[j]$ or $\PDD_y[j] = \PDD_x[j]+1$.
    \item 
  For each position $j$ in the red zone of $\PDG_x$, we have
  either $\PDG_y[j] = \PDG_x[j]$ or $\PDG_y[j] = \PDG_x[j]-1$.
  \end{enumerate}
\end{enumerate}
\end{restatable}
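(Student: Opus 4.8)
The plan is to prove both items of each case by a single local argument, exploiting the fact that $x$ and $y$ coincide everywhere except at the two swapped positions $i$ and $i+1$, whose entries are merely exchanged. I would fix one position $j$ in the relevant red zone and track how its \emph{parent} — the nearest strictly smaller entry on the appropriate side — is displaced by the swap. Since $y[k]=x[k]$ for every $k\notin\{i,i+1\}$, the nearest-smaller computation returns the same witness in $x$ and in $y$ unless that witness is position $i$ or $i+1$. This reduces each item to deciding, for the single pair $\{i,i+1\}$, whether and how far the witness shifts.

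I would treat Case~1 ($x[i]<x[i+1]$), item~\ref{lm:red:item1} first, the remaining items being symmetric. Let $j$ be a position in the red zone of $\PDD_x$, so $j\geq i+2$, and let $M$ be the nearest position $k<j$ with $k\notin\{i,i+1\}$ and $x[k]<x[j]$ (undefined if none exists); $M$ is identical for $x$ and $y$, and since it excludes $i$ and $i+1$ it is either $\leq i-1$ or $\geq i+2$. The forward parent of $j$ is the maximum of $M$ together with whichever of $i,i+1$ carries a value below $x[j]$. The governing structural fact is $x[i]<x[i+1]\Rightarrow(x[i+1]<x[j]\Rightarrow x[i]<x[j])$, which leaves only three possibilities for $x[j]$ relative to the swapped pair. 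If $x[j]$ exceeds both swapped values, or falls below both, then $i$ and $i+1$ contribute the same qualifying positions before and after the swap, so the parent is unchanged and $\PDD_y[j]=\PDD_x[j]$. The one remaining case, $x[i]<x[j]<x[i+1]$, is where the two values play opposite roles: the qualifying witness among $\{i,i+1\}$ is $i$ in $x$ but $i+1$ in $y$. Hence, if $M\leq i-1$ or is undefined the parent moves from $i$ to $i+1$ and the distance drops by exactly one, giving $\PDD_y[j]=\PDD_x[j]-1$; if instead $M\geq i+2$ it pre-empts both and the parent is unchanged. Either way $\PDD_y[j]\in\{\PDD_x[j],\PDD_x[j]-1\}$.

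The reverse statement for $\PDG$ is identical after replacing ``nearest smaller on the left'' by ``nearest smaller on the right'': for $j\leq i-1$ the witness can only jump from $i$ to $i+1$, which now \emph{increases} the distance by one, yielding $\PDG_y[j]\in\{\PDG_x[j],\PDG_x[j]+1\}$. Case~2 ($x[i]>x[i+1]$) follows by the mirror argument: the larger value now sits at position $i$, the governing implication becomes $x[i]<x[j]\Rightarrow x[i+1]<x[j]$, and the distinguished interval is $x[i+1]<x[j]<x[i]$, so the witness shifts from $i+1$ to $i$ and every sign reverses, producing the $+1$ behaviour for $\PDD_y$ and the $-1$ behaviour for $\PDG_y$. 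The step I expect to require the most care is justifying that the witness shifts by \emph{at most one position and never in the forbidden direction}: this rests entirely on the fact that exactly two \emph{adjacent} entries are exchanged, together with the monotone implication above, which prevents any third value from lying between the old and new witness. Confirming that the red-zone bounds of Def.~\ref{def:red} place $j$ precisely in the regime where these one-step shifts, rather than the equalities of Lemma~\ref{lm:blue}, are the sharp conclusion is then a routine consistency check against Def.~\ref{def:blue}.
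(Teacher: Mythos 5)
Your proof is correct and follows essentially the same route as the paper's: both arguments track the nearest-smaller witness of a red-zone position and observe that it can only change by shifting between positions $i$ and $i+1$, the paper organizing the case split by the value of $\PDD_x[k]$ (equal to $k-i$, to $k-i-1$, or neither, the last case dispatched by Lemma~\ref{lm:propzones}) while you organize it by where $x[j]$ sits relative to the two swapped values. The two decompositions are equivalent, and your handling of the subcase $x[i]<x[j]<x[i+1]$ matches the paper's case $\PDD_x[k]=k-i$ exactly.
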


\begin{comment}
\begin{lemma}[The red zones]\label{lm:red}
We distinguish two symmetrical cases:
\begin{enumerate}
    \item $x[i] < x[i+1]$:
  \begin{enumerate}
    \item\label{lm:red:item1} 
  For each position $j$ in the red zones of $\PDD_x$, we have
  either $\PDD_y[j] = \PDD_x[j]$ or $\PDD_y[j] = \PDD_x[j]-1$.
    \item 
  For each position $j$ in the red zones of $\PDG_x$, we have
  either $\PDG_y[j] = \PDG_x[j]$ or $\PDG_y[j] = \PDG_x[j]+1$.
  \end{enumerate}
    \item $x[i] > x[i+1]$:
  \begin{enumerate}
    \item 
  For each position $j$ in the red zones of $\PDD_x$, we have
  either $\PDD_y[j] = \PDD_x[j]$ or $\PDD_y[j] = \PDD_x[j]+1$.
    \item 
  For each position $j$ in the red zones of $\PDG_x$, we have
  either $\PDG_y[j] = \PDG_x[j]$ or $\PDG_y[j] = \PDG_x[j]-1$.
  \end{enumerate}
\end{enumerate}
\end{lemma}
\end{comment}

We now show that swaps at different positions produce different Cartesian trees.

\begin{lemma}\label{lm:uniondisjointe}
Let $x$ be a sequence of length $n$ and $i,j \in \{1,\ldots, n-1\}$, with $i\neq j$. 
Then $\tau(x,i) \nCTM \tau(x,j)$.
\end{lemma}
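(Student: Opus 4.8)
The plan is to bypass the swap characterizations of this section and argue directly from the recursive definition of the Cartesian tree. Because the parent-distance representation is in one-to-one correspondence with Cartesian trees, $\tau(x,i)\CTM\tau(x,j)$ would mean that the two sequences induce the \emph{same} tree on the node set $\{1,\dots,n\}$. I would rule this out by exhibiting, for every $i\neq j$, a contiguous range on which the two swapped sequences disagree about the \emph{position} of their minimum. The enabling observation — the classical bridge between Cartesian trees and range-minimum queries~\cite{demaine09}, which I would first derive from the recursive definition of $C$ — is that for any sequence $w$ and any range $[p,q]$ the position $\arg\min_{p\le k\le q}w[k]$ equals the lowest common ancestor of $p$ and $q$ in $C(w)$, and is therefore determined by the tree structure alone. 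Thus two sequences with the same Cartesian tree must place the minimum of every range at the same position, and it suffices to violate this for a single range. By symmetry of the statement I assume $i<j$ and write $y=\tau(x,i)$ and $z=\tau(x,j)$.

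For $j\ge i+2$ the argument is immediate: the swap producing $z$ only disturbs positions $j,j+1\ge i+2$, so $z[i]=x[i]$ and $z[i+1]=x[i+1]$, while $y[i]=x[i+1]$ and $y[i+1]=x[i]$. The ordered pair on positions $(i,i+1)$ is therefore reversed between $y$ and $z$; since the entries are distinct, the minimum over the range $[i,i+1]$ falls on position $i$ for one sequence and on $i+1$ for the other, which already separates the two trees.

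The case $j=i+1$ is the one I expect to be the real obstacle, since the two swaps overlap and small examples show that no single position need reveal the difference — one genuinely has to look at the whole affected window. Setting $a=x[i]$, $b=x[i+1]$, $c=x[i+2]$ (with $c$ present because $j=i+1\le n-1$, hence $i+2\le n$), the swap at $i$ lays out $(b,a,c)$ on positions $i,i+1,i+2$, whereas the swap at $i+1$ lays out $(a,c,b)$. I would then simply track where the window minimum $m=\min\{a,b,c\}$ lands: if $m=a$ it sits at position $i+1$ in $y$ but at position $i$ in $z$; if $m=b$, at position $i$ in $y$ but at position $i+2$ in $z$; if $m=c$, at position $i+2$ in $y$ but at position $i+1$ in $z$. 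In all three cases the minimum over $[i,i+2]$ occupies different positions in $y$ and $z$, so this range separates the trees.

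Combining the two cases yields $\tau(x,i)\nCTM\tau(x,j)$ for all $i\neq j$. The only step needing real care is the range-minimum invariant; once that is in place, each case reduces to reading off how a single swap permutes two or three entries, so I would not need any computation with the parent-distance tables themselves.
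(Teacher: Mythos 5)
Your proof is correct, and it takes a genuinely different route from the paper's. The paper reuses its own machinery: for $j>i+1$ it invokes the green-zone lemma (Lemma~\ref{lm:green}) to force $\PDD_{\tau(x,i)}$ and $\PDD_{\tau(x,j)}$ to agree on all positions $k<j$, and then derives a contradiction at position $i+1$ via Lemma~\ref{lm:lesser}; for $j=i+1$ it reduces to an exhaustive check of sequences of length $3$. You instead bypass the parent-distance tables entirely and use the range-minimum/LCA characterization of Cartesian trees: if two sequences share a Cartesian tree then the position of the minimum of every range coincides, so it suffices to exhibit one separating range. Your two ranges work: $[i,i+1]$ when $j\ge i+2$ (the swap at $j$ leaves these two entries untouched while the swap at $i$ reverses them), and $[i,i+2]$ when $j=i+1$ (the position $i+2\le n$ exists because $j\le n-1$, and your three-case analysis on which of $a,b,c$ is smallest is exhaustive and gives a different minimum position in each case). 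Your argument is self-contained and arguably tighter on the adjacent case, where the paper's reduction to length~$3$ is asserted rather than fully justified --- one must still argue that entries outside the window cannot restore equality of the trees, which is exactly what your range-minimum invariant delivers for free. What the paper's version buys is economy within the article, since it recycles lemmas needed anyway for the algorithm of Section~\ref{sec:algo}. The only piece you still owe is the short induction establishing the minimum-position/LCA invariant from the recursive definition of $C$; that is standard (and consistent with the RMQ connection the paper already cites), and you flag it explicitly.
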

\begin{proof}
Suppose without loss of generality that $j > i $.
If $j>i+1$, then according to Lemma~\ref{lm:green}, we have:
$$\forall k< j, \PDD_x[k] = \PDD_{\tau(x,j)}[k] = \PDD_{\tau(x,i)}[k] $$
And according to Lemma~\ref{lm:lesser}, we have that $\PDD_x[i+1] \neq \PDD_{\tau(x,i)}[i+1]$,
which leads to a contradiction.

Then suppose that $j=i+1$, then it is sufficient to consider what happens on a sequence of length $3$:
having local differences on the parent-distance tables implies having different parent-distances and therefore do not $CT$ match. One can easily check that the lemma is true for each sequence of length $3$.

\end{proof}

\section{Swap Graph of Cartesian Trees \label{sec:graph}}

Let $\mathcal{C}_n$ be the set of Cartesian trees with $n$ nodes, which is equal to the set of binary trees with $n$ nodes. Also $\mathcal{C} = \bigcup_{n\ge 0} \mathcal{C}_n$.
Let $\mathcal{G}_n = (\mathcal{C}_n, E_n)$ be the Swap Graph of Cartesian trees,
where $\mathcal{C}_n$ is its set of vertices and $E_n$ the set of edges.
Let $x$ and $y$ be two sequences, we have $\{C(x),C(y)\} \in E_n$ if $x \tauCTM y$.
\figurename~\ref{fig:graph} shows the Swap Graph with $n$ smaller than $4$. 

\begin{figure}[h!]
\begin{center}
\begin{minipage}{0.49\textwidth}
\begin{center}
    \includegraphics{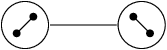}\\
    \vspace{5em}

\includegraphics{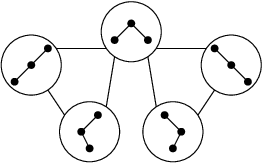}
\end{center}

\end{minipage}
\begin{minipage}{0.5\textwidth}
\includegraphics[scale=0.9]{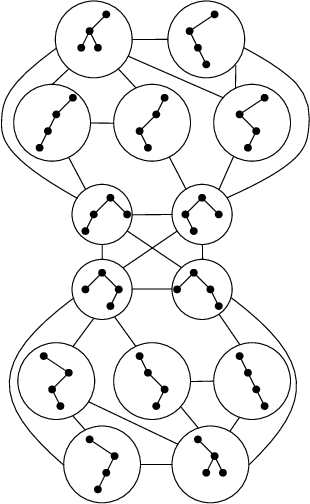}    
\end{minipage}
\end{center}
\caption{Swap Graph of Cartesian trees of size $2, 3$ and $4$. \label{fig:graph}}
\end{figure}

In the following, we study the set of neighbors a vertex can have in the Swap Graph.
Let $T \in \mathcal{C}_n$ be a Cartesian tree of size $n$ and $ng(T)$ be its set of neighbors in the Swap Graph. 
Also, for $i < n$, we note $ng(T,i)$ the set of trees obtained by doing a swap at position $i$ on the associated sequences, that is
$$ng(T,i) = \{ C(y) \in \mathcal{C}_n \mid \exists\ x\text{ such that  }T=C(x) \text{ and } y=\tau(x,i)  \}$$
Also, we have
$$ng(T) = \bigcup_{i=1}^{n-1} ng(T,i)$$
where all unions are disjoint according to Lemma~\ref{lm:uniondisjointe}.

\begin{lemma}\label{lm:decomp}
    Let $T \in \mathcal{C}_{n}$ be a Cartesian tree of size $n$, with a left-subtree $A$ of size $k-1$
    and $B$ a right subtree of size $n-k$. 
    We have
    $$|ng(T)| = |ng(A)| + |ng(B)| + |ng(T,k-1)| + |ng(T,k)|$$
\end{lemma}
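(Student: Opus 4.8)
The plan is to exploit the disjointness of the unions defining $ng$. By Lemma~\ref{lm:uniondisjointe} the sets $ng(T,i)$ are pairwise disjoint, so $|ng(T)| = \sum_{i=1}^{n-1} |ng(T,i)|$, and likewise $|ng(A)| = \sum_{i=1}^{k-2} |ng(A,i)|$ and $|ng(B)| = \sum_{j=1}^{n-k-1} |ng(B,j)|$. It therefore suffices to partition the swap positions $\{1,\ldots,n-1\}$ of $T$ into four groups and match each group with the corresponding term on the right-hand side. Since the left subtree $A$ of size $k-1$ occupies inorder positions $1,\ldots,k-1$, the root occupies position $k$ (it carries the global minimum), and $B$ occupies positions $k+1,\ldots,n$, the natural partition is: the \emph{left-internal} positions $\{1,\ldots,k-2\}$, the two \emph{boundary} positions $k-1$ and $k$, and the \emph{right-internal} positions $\{k+1,\ldots,n-1\}$. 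The boundary positions already appear verbatim in the statement, so the whole proof reduces to showing $\sum_{i=1}^{k-2}|ng(T,i)| = |ng(A)|$ and $\sum_{i=k+1}^{n-1}|ng(T,i)| = |ng(B)|$.

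For the first equality I would prove, for each $i$ with $1 \le i \le k-2$, a bijection $\phi_i \colon ng(T,i) \to ng(A,i)$ sending a neighbour tree to its left subtree. The point is that when $i \le k-2$ both swapped positions lie in $\{1,\ldots,k-1\}$, so for any $x$ with $C(x)=T$ the swap $\tau(x,i)$ permutes only the first $k-1$ entries; the global minimum stays at position $k$, hence the root is unchanged, the block $x[k+1\ldots n]$ and therefore $B$ are untouched, and the left subtree of $C(\tau(x,i))$ equals $C(\tau(x[1\ldots k-1],i))$, which is a member of $ng(A,i)$. This shows $\phi_i$ is well defined; it is injective because a tree with prescribed root position $k$ and right subtree $B$ is determined by its left subtree. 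The symmetric argument for positions $k+1\le i \le n-1$, reading the right block $x[k+1\ldots n]$ as a length-$(n-k)$ sequence, yields bijections $ng(T,i) \to ng(B,i-k)$.

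The step I expect to be the crux is surjectivity. Given any $A' \in ng(A,i)$, there is by definition a sequence $w$ of length $k-1$ with $C(w)=A$ and $C(\tau(w,i))=A'$; I must produce an $x$ realising $T$ whose induced left swap yields $A'$. I would build $x$ by placing the global minimum at position $k$, filling positions $1,\ldots,k-1$ with values order-isomorphic to $w$ (all above the minimum), and positions $k+1,\ldots,n$ with any realisation of $B$ (again above the minimum). Since Cartesian trees depend only on the relative order of the entries, $C(x)=T$, and because a swap commutes with order-preserving relabelling, $C(\tau(x[1\ldots k-1],i)) = C(\tau(w,i)) = A'$, so $\phi_i$ hits $A'$; the same construction handles $B$. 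Summing the per-position equalities $|ng(T,i)|=|ng(A,i)|$ over $1\le i\le k-2$ and $|ng(T,i)|=|ng(B,i-k)|$ over $k+1\le i\le n-1$, and adding the two boundary terms, gives the claim. Finally I would note the degenerate cases $k=1$ and $k=n$: there the relevant subtree is empty and the corresponding boundary position falls outside $\{1,\ldots,n-1\}$, so both the $|ng(A)|$ (resp. $|ng(B)|$) term and the $|ng(T,k-1)|$ (resp. $|ng(T,k)|$) term vanish, and the identity still holds.
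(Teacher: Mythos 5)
Your proof is correct and follows essentially the same route as the paper: partition the swap positions into those internal to $A$, internal to $B$, and the two boundary positions $k-1$ and $k$, then identify $\sum_{i=1}^{k-2}|ng(T,i)|$ with $|ng(A)|$ and $\sum_{i=k+1}^{n-1}|ng(T,i)|$ with $|ng(B)|$ using the disjointness from Lemma~\ref{lm:uniondisjointe}. The paper simply asserts the two identifications in one line, whereas you supply the bijection (including the surjectivity construction and the degenerate cases $k=1$, $k=n$) explicitly, which the paper leaves implicit.
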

\begin{proof}
    The result follows from Lemma~\ref{lm:uniondisjointe} and the definition of $ng(T)$.
    Indeed, we have $|ng(A)| = |\bigcup_{i=1}^{k-2} ng(T,i)|$ and $|ng(B)| = |\bigcup_{i=k+1}^{n-1} ng(T,i)|$.
\end{proof}

Let $lbl: \mathcal{C} \mapsto \mathbb{N}$ be the function that computes the length of the left-branch of a tree such that 
$$\forall\ T\in \mathcal{C}, lbl(T)= 
\begin{cases} 
0, \text{ if } T \text{ is empty}\\ 
1+lbl(A), \text{ where } A \text{ is the left-subtree of }T.
\end{cases}$$
The equivalent $lbr$ function can be defined to compute the length of the right-branch.

\begin{lemma}\label{lm:decompT}
    Let $T \in \mathcal{C}_{n}$ be a Cartesian tree with a root $i$, a left-subtree $A$ and a right-subtree $B$.
    $$|ng(T,i-1)| = lbl(B)+1 \text{ and } |ng(T,i)| = lbr(A)+1$$
\end{lemma}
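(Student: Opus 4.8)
The plan is to work directly with the Cartesian tree structure and to exploit the fact that a swap at the root (position $i$) or just before it (position $i-1$) moves the global minimum $x[i]$ by exactly one position, so that counting the reachable trees reduces to counting the ways a single element can be reinserted along a spine.

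First I would fix any sequence $x$ with $C(x)=T$, so that $x[i]$ is the minimum, $A=C(x[1\ldots i-1])$ and $B=C(x[i+1\ldots n])$. For the swap at position $i$, after applying $\tau(x,i)$ the minimum sits at position $i+1$, hence $C(\tau(x,i))$ has its root at $i+1$. Its right subtree is $C(x[i+2\ldots n])$, namely $B$ with its leftmost node deleted (that node being replaced by its own right subtree); this tree depends only on $B$ and not on the particular choice of $x$, so it is a fixed tree $B'$. Its left subtree is $C(x[1\ldots i-1]\cdot x[i+1])$, that is, $A$ with the extra element $x[i+1]$ appended at the right end.

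The key combinatorial step is then to count the trees obtainable by appending one element to the right of a sequence realizing $A$. Since Cartesian trees depend only on relative order and $x[i]$ separates $A$ from $B$, the rank of $x[i+1]$ among the elements of $A$ is a free parameter. I would argue that appending a new rightmost element attaches it along the right spine of $A$, whose values are strictly increasing from the root downwards, and that the gap into which the rank of $x[i+1]$ falls (before the root, between two consecutive spine nodes, or below the deepest spine node) determines the outcome; this yields exactly $lbr(A)+1$ distinct left subtrees $A'$. Because the root position and $B'$ are fixed, distinct $A'$ produce distinct global trees, so $|ng(T,i)|=lbr(A)+1$. The argument for position $i-1$ is symmetric: $\tau(x,i-1)$ moves the minimum to position $i-1$, the new left subtree is $A$ with its rightmost node removed (again fixed), and the new right subtree is $B$ with $x[i-1]$ prepended along its left spine, giving $lbl(B)+1$ trees.

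The main obstacle I anticipate is making this counting rigorous rather than picture-based: proving that appending a rightmost element genuinely attaches along the right spine, and that the $lbr(A)+1$ comparison outcomes are in bijection with the distinct resulting subtrees (both injectivity and surjectivity of the attach-position map). I would also need to check realizability, i.e. that every prescribed rank of $x[i+1]$ relative to $A$ is achieved by some $x$ with $C(x)=T$; this follows because the within-$A$ and within-$B$ orders can be realized by disjoint integer values interleaved so as to place $x[i+1]$ anywhere among the elements of $A$. Finally I would dispatch the degenerate cases in which $A$ or $B$ is empty, so that $lbr(A)$ or $lbl(B)$ equals $0$ and a single tree results, noting that these correspond precisely to the boundary positions $i=1$ and $i=n$.
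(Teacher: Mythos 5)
Your proposal is correct and follows essentially the same route as the paper: the swap moves one element across the root, and the reachable trees are counted by the $lbr(A)+1$ (resp.\ $lbl(B)+1$) gaps along the spine into which that element's rank can fall, each gap being realizable because the relative order between elements of $A$ and $B$ is unconstrained. If anything, your version is more explicit than the paper's about injectivity, realizability, and the degenerate cases.
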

\begin{proof}
    We only prove that $|ng(T,i-1)| = lbl(B)+1$ since the rest of the proof uses the same arguments.
    Let $x$ be a sequence such that $C(x) = T$.
    As stated in the definition section (see \figurename~\ref{fig:transposition}), the swap $\tau(x,i-1)$ moves the rightmost node of $A$ into a leftmost position in $B$. Let $j_1, \ldots, j_{lbl(B)}$ be the positions
    in the sequence $x$ that corresponds to the nodes of the left branch of $B$. For each $\ell < lbl(B)$, there always
    exists a sequence $y = \tau(x,i)$ such that $y[i] < y[j_1] < \cdots < y[j_\ell] < y[i-1] < y[j_{\ell+1}] < \cdots < y[j_{lbl(B)}]$. Therefore, there exist exactly $lbl(B)+1$ possible output trees when applying such a swap. 
\end{proof}

\begin{lemma}\label{lm:complete}
Let $T_h$ be the complete binary tree of height $h>0$, we have 
$$|ng(T_h)| = 6(2^h-1)-2h$$
\end{lemma}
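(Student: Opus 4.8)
The plan is to prove the identity by induction on $h$, turning the recursive structure of the complete binary tree into a first-order linear recurrence via Lemmas~\ref{lm:decomp} and~\ref{lm:decompT}. Throughout, write $f(h) = |ng(T_h)|$, and record the two structural facts about $T_h$ for $h>0$ that I will use: both its left subtree $A$ and its right subtree $B$ are copies of $T_{h-1}$, and its left and right spines each contain $h+1$ nodes, i.e. $lbl(T_h) = lbr(T_h) = h+1$. The latter follows immediately from the defining recursion $lbl(T_h) = 1 + lbl(T_{h-1})$ together with $lbl(T_0) = 1$ for the single-node tree $T_0$ (and symmetrically for $lbr$).

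First I would set up the recurrence. Let $k$ be the position of the root of $T_h$. By Lemma~\ref{lm:decompT}, $|ng(T_h, k-1)| = lbl(B) + 1 = lbl(T_{h-1}) + 1 = h+1$ and, symmetrically, $|ng(T_h, k)| = lbr(A) + 1 = lbr(T_{h-1}) + 1 = h+1$. Feeding these two values, together with $|ng(A)| = |ng(B)| = |ng(T_{h-1})| = f(h-1)$, into the disjoint decomposition of Lemma~\ref{lm:decomp} gives
\[ f(h) = 2f(h-1) + 2(h+1). \]

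Next I would handle the base. The tree $T_0$ is a single node, which admits no swap, so $f(0) = 0$; equivalently one may start the induction at $h=1$ after checking $f(1) = 4$ directly, which matches the fact that the balanced three-node tree is adjacent in the Swap Graph to each of the four remaining binary trees on three nodes. With the base in hand, the induction step is a routine substitution: assuming $f(h-1) = 6(2^{h-1}-1) - 2(h-1)$ and plugging it into the recurrence yields, after simplification, $f(h) = 6\cdot 2^h - 2h - 6 = 6(2^h - 1) - 2h$, closing the induction.

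The only genuinely delicate point, and the one I would be most careful about, is the spine-length bookkeeping: the sum $|ng(T_h,k-1)| + |ng(T_h,k)|$ must equal $2(h+1)$ and not $2h$, since an off-by-one in the length of the spines of $T_{h-1}$ (equivalently, a different height convention that places a single node at $h=1$) would change the inhomogeneous term of the recurrence and destroy the closed form. Once $lbl(T_{h-1}) = lbr(T_{h-1}) = h$ is pinned down correctly, the remainder is a standard linear recurrence whose solution is exactly the claimed expression.
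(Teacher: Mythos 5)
Your proposal is correct and follows essentially the same route as the paper: both derive the recurrence $f(h)=2f(h-1)+2(h+1)$ from Lemmas~\ref{lm:decomp} and~\ref{lm:decompT} with the base case $f(0)=0$, the only cosmetic difference being that you verify the closed form by induction while the paper unrolls the recurrence into the sum $\sum_{i=0}^{h-1}(i+2)2^{h-i}$ and simplifies. Your attention to the spine-length convention ($lbl(T_{h-1})=h$, hence the inhomogeneous term $2(h+1)$) is exactly the point where the paper's computation also hinges.
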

\begin{proof}
Let us first remark that a complete binary tree of height $0$ is simply a leaf, thus $ng(T_0) = \emptyset$.
In the following, we consider $h>0$.
Using the fact we are dealing with complete trees and Lemma~\ref{lm:decomp}, we have:
    $$|ng(T_h)| = 2|ng(T_{h-1})| + |ng(T_h,2^h-1)| + |ng(T_h, 2^h )|$$
Also, using Lemma~\ref{lm:decompT} we have
    $$|ng(T_h)| = 2|ng(T_{h-1})| + 2(h+1) $$
Using the telescoping technique we obtained that
$$|ng(T_h)| = \sum_{i=0}^{h-1} (i+2)2^{h-i}$$
Which simplifies into
$$|ng(T_h)| = 6(2^h-1)-2h$$
\end{proof}
This lemma can be reformulated: let $n = 2^{h+1}-1$ be the number of nodes in the complete tree, we have
$$|ng(T_h)| = \lceil 3(n-1)-2(\log_2(n+1)-1) \rceil$$

\begin{lemma}\label{lm:neighbor}
For every Cartesian tree $T$ of size $n$, we have 
$$n-1 \le |ng(T)| \le \lceil 3(n-1)-2(\log_2(n+1)-1) \rceil$$
\end{lemma}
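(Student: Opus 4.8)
The plan is to establish the two inequalities separately. The lower bound follows at once from the disjoint-union description recalled just before the statement: since $ng(T)=\bigcup_{i=1}^{n-1} ng(T,i)$ and this union is disjoint by Lemma~\ref{lm:uniondisjointe}, we have $|ng(T)|=\sum_{i=1}^{n-1}|ng(T,i)|$. Each $ng(T,i)$ is nonempty, because any sequence $x$ with $C(x)=T$ admits the swap $\tau(x,i)$ for every $1\le i\le n-1$, and by Lemmas~\ref{lm:lesser} and~\ref{lm:greater} the resulting tree genuinely differs from $T$ (the value at position $i+1$ of the parent-distance table changes). Hence $|ng(T)|\ge n-1$, the bound being attained by a monotone sequence, whose Cartesian tree is a path.

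For the upper bound I would argue by strong induction on $n$ through the decomposition of Lemmas~\ref{lm:decomp} and~\ref{lm:decompT}. Writing $T$ as a root with left subtree $A$ and right subtree $B$, both nonempty, these give
\[
|ng(T)| = |ng(A)| + |ng(B)| + lbl(B) + lbr(A) + 2,
\]
with the evident degenerate forms when a subtree is empty. The obstruction to a direct induction is that the spine terms $lbr(A)$ and $lbl(B)$ are not governed by $|ng(A)|$ and $|ng(B)|$; estimating them by the subtree sizes is far too wasteful. To obtain a self-contained recursion I would carry the auxiliary quantity $\rho(T):=|ng(T)|+lbl(T)+lbr(T)$, which the same two lemmas turn into the additive relation $\rho(T)=\rho(A)+\rho(B)+4$ at every two-child root (and $\rho(T)\le\rho(A)+\rho(B)+4$ in all cases, with $\rho(\emptyset)=0$). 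A short induction then yields $\rho(T)\le 3n-1$, with equality exactly for full binary trees, and subtracting the spines gives the clean bound
\[
|ng(T)| \le 3(n-1) + 2 - \big(lbl(T)+lbr(T)\big).
\]

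It then remains to pass from this bound to the stated one, i.e.\ to absorb the difference into the logarithmic term. Comparing with the target $\lceil 3(n-1)-2(\log_2(n+1)-1)\rceil$, this reduces, for full binary trees, to a logarithmic lower bound on the total spine length of the form $lbl(T)+lbr(T)\ge 2\log_2(n+1)$ up to the rounding; trees that have a degree-one node, for which the previous display is not tight, must be handled by a separate estimate. The complete tree of Lemma~\ref{lm:complete} is the intended extremal configuration: there $lbl(T)=lbr(T)=\log_2(n+1)$, the two displays meet, and the reformulated value of Lemma~\ref{lm:complete} supplies exactly the right-hand side, the ceiling accounting for sizes that are not of the form $2^{h+1}-1$.

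The step I expect to be the main obstacle is precisely this last one, namely identifying and certifying the extremal tree. The difficulty is structural rather than computational: a subtree influences the count twice, through its own number of neighbors and, via its spine length, through the boundary swaps of its parent, so enlarging a subtree raises both $|ng(B)|$ and $lbl(B)$ at once, which competes with the gain obtained by balancing the two subtree counts. Controlling these opposing effects uniformly in $n$, and reconciling them with the ceiling, is where the real work lies; I would approach it by keeping the additive invariant $\rho$ (which is insensitive to how the $n-1$ non-root nodes are split) as the carrier of the induction and converting back to $|ng(T)|$ only at the end, so that the spine correction $lbl(T)+lbr(T)$ is charged once, globally, rather than re-estimated level by level.
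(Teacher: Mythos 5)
Your lower bound is the paper's own argument (disjointness from Lemma~\ref{lm:uniondisjointe} plus non-emptiness of each $ng(T,i)$) and is fine. For the upper bound you take a genuinely different route from the paper, which does not induct on the root decomposition at all: it starts from the complete binary tree, where Lemma~\ref{lm:complete} gives equality, and removes leaves one at a time, arguing that each removal costs at least $3$ neighbours except for at most $\log_2(n+1)-1$ ``extremity'' removals that cost only $2$. Your potential $\rho(T)=|ng(T)|+lbl(T)+lbr(T)$ is a clean idea: the recursion $\rho(T)=\rho(A)+\rho(B)+4$ at a two-child root (with $\le$ in the degenerate cases) and the resulting bounds $\rho(T)\le 3n-1$ and $|ng(T)|\le 3(n-1)+2-\bigl(lbl(T)+lbr(T)\bigr)$ are correct consequences of Lemmas~\ref{lm:decomp} and~\ref{lm:decompT}, with equality exactly for full binary trees.

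The genuine gap is the final step, which you flag as ``the main obstacle'' but which in fact cannot be carried out: the inequality $lbl(T)+lbr(T)\ge 2\log_2(n+1)$ is false for full binary trees. Take a root whose left subtree has a leaf as its left child and whose right subtree has a leaf as its right child; then $lbl(T)+lbr(T)=6$ no matter how large the two subtrees are, so your intermediate bound only yields $|ng(T)|\le 3(n-1)-4$ for such trees, which does not reach the stated ceiling once $2\log_2(n+1)-2>4$. Worse, because your recursion is an equality on full binary trees, these ``inward-caterpillar'' trees actually achieve $|ng(T)|=3(n-1)-4$: for $n=11$, a root with a $3$-node cherry on the left and a $7$-node full tree whose right child is a leaf on the right has, by Lemmas~\ref{lm:decomp} and~\ref{lm:decompT}, $|ng(T)|=4+15+(3+1)+(2+1)=26$, whereas $\lceil 3\cdot 10-2(\log_2 12-1)\rceil=25$. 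So the target bound is incompatible with Lemmas~\ref{lm:decomp} and~\ref{lm:decompT}, and no completion of your argument (nor, it seems, the paper's leaf-removal sketch, which does not account for such trees) can establish it as stated; what your method actually proves is the weaker, spine-dependent bound $|ng(T)|\le 3(n-1)+2-\bigl(lbl(T)+lbr(T)\bigr)\le 3(n-1)$.
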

\begin{proof}
The first part of the inequality is given by Lemma~\ref{lm:uniondisjointe}: each tree has at least
$n-1$ neighbors, since each transposition at a position in $\{1,\ldots, n-1\}$ produces different Cartesian trees. 
The upper bound is obtained by the following reasoning: each Cartesian tree can be obtained
starting from a complete binary tree by iteratively removing some leaves. 
Let $T$ be a Cartesian tree for which the upper bound holds and $T'$ the tree obtained after
removing a leaf from $T$.
\begin{itemize}
    \item 
If the leaf $i$ had a right sibling (resp. left sibling), then $|ng(T,i)| = 2$ (resp. $|ng(T,i-1)| = 2$) and $|ng(T,i+1)| \ge 2$  (resp. $|ng(T,i)| \ge 2$). In $T'$, the neighbors from $ng(T,i)$ (resp. $ng(T,i-1)$) are lost and $|ng(T',i)|=|ng(T,i+1)|-1$ (resp. $|ng(T',i-1)|=|ng(T,i)|-1$). Therefore at $|ng(T')| \le |ng(T)|-3$.
\item 
If the leaf $i$ had no sibling, then there are two possibilities:
\begin{itemize}
    \item either the leaf were at an extremity of the tree (the end of the left/right branch). In this case, removing the leaf will only remove $2$ neighbors. This can only be made twice for each level of the complete tree (only one time for the first level), that is $log_2(n+1)-1$ times.
    \item or the leaf is in ``the middle'' of the tree, in this case the tree is at least of height $2$
    and we can remove at least $3$ neighbors.
\end{itemize}
\end{itemize}
\end{proof}

We use the previous lemma to obtain a lower bound on the diameter of the Swap Graph.

\begin{lemma}
    The diameter of the Swap Graph $\mathcal{G}_n$ is $\Omega(\frac{n}{\ln{n}})$.
\end{lemma}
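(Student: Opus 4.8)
The plan is to derive the lower bound from a volume (Moore-type) argument: the Swap Graph has exponentially many vertices but only linear maximum degree, so a ball of small radius cannot cover all of $\mathcal{C}_n$. Concretely, I would play the cardinality of $\mathcal{C}_n$ against the degree bound of Lemma~\ref{lm:neighbor}.

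First I would count the vertices. Since $\mathcal{C}_n$ is exactly the set of binary trees on $n$ nodes, $|\mathcal{C}_n|$ is the $n$-th Catalan number $C_n=\frac{1}{n+1}\binom{2n}{n}$. By Stirling's formula $C_n \sim 4^n/(n^{3/2}\sqrt{\pi})$, hence $\ln|\mathcal{C}_n| = 2n\ln 2 - \Theta(\ln n) = \Theta(n)$. Second, I would extract from Lemma~\ref{lm:neighbor} a uniform bound on the maximum degree: every vertex $T$ satisfies $|ng(T)| \le \lceil 3(n-1)-2(\log_2(n+1)-1)\rceil \le 3n$, so the maximum degree $\Delta := \max_{T\in\mathcal{C}_n}|ng(T)|$ obeys $\Delta = O(n)$ and thus $\ln \Delta = O(\ln n)$.

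The core step is the ball-growth estimate. Fixing any vertex $v$, the number of vertices at distance at most $D$ from $v$ is bounded by $\sum_{k=0}^{D}\Delta^k \le \Delta^{D+1}$, since each vertex contributes at most $\Delta$ new neighbors at the next level (and $\Delta\ge 2$ for $n$ large). If $\mathcal{G}_n$ is connected with diameter $D$, every vertex lies in this ball, so $|\mathcal{C}_n| \le \Delta^{D+1}$. Taking logarithms and rearranging gives
$$D \ge \frac{\ln|\mathcal{C}_n|}{\ln \Delta} - 1 = \frac{\Theta(n)}{O(\ln n)} = \Omega\!\left(\frac{n}{\ln n}\right).$$
If $\mathcal{G}_n$ were disconnected its diameter would be infinite and the bound trivial, so the estimate holds in all cases.

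I expect the delicate points to be bookkeeping rather than conceptual. The main thing to get right is the asymptotics $\ln|\mathcal{C}_n| = \Theta(n)$, which is precisely where the $1/\ln n$ factor originates once divided by $\ln\Delta = \Theta(\ln n)$; and the Moore-type inequality must be stated with the \emph{maximum} degree supplied by Lemma~\ref{lm:neighbor} rather than a per-vertex degree. Everything else reduces to a short computation, so I do not anticipate a serious obstacle.
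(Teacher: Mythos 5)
Your proposal is correct and follows essentially the same route as the paper: both compare the Catalan-number count of vertices against the $O(n)$ degree bound of Lemma~\ref{lm:neighbor} via a Moore-type volume argument and solve for the radius. Your write-up is in fact slightly more careful than the paper's (explicit ball-growth inequality and the disconnected case), but the underlying idea is identical.
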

\begin{proof}
    The number of vertices in the graph is equal to the number of binary trees enumerated by the Catalan numbers, that is $\frac{\binom{2n}{n}}{n+1}$. Since the maximal degree of a vertex is less than $3n$ according to Lemma~\ref{lm:neighbor}, the diameter is lower bounded by the value $k$ such
    that:
    $$(3n)^k = \frac{\binom{2n}{n}}{n+1}$$
    $$\implies k = \frac{\ln\left(\frac{\binom{2n}{n}}{n + 1} \right)}{\ln(3) + \ln(n)}$$
    $$\implies k = \frac{2n\ln(2n) - 2n\ln(n) - \ln(n + 1)}{\ln(3) + \ln(n)}$$
    By decomposing $2n\ln(2n)$ into $2n\ln{2} + 2n\ln{n}$ we obtain
    $$\implies k = \frac{2n\ln(2)}{\ln(3) + \ln(n)} - \frac{\ln(n + 1)}{\ln(3) + \ln(n)}$$
    $$\implies k = \Omega\left(\frac{n}{\ln{n}} \right)$$
\end{proof}

\section{Algorithms \label{sec:algo}}

In this section, we present two algorithms to compute the number of occurrences of $\tauCTM$ matching
of a given pattern in a given text.
The first algorithm uses two parent-distance tables and the set of Lemmas presented in Section~\ref{sec:charac}. 
The second algorithm uses Lemma~\ref{lm:neighbor} and the fact that each sequence has a bounded number of Cartesian trees that can be obtained from a swap to build an automaton.

\subsection{An Algorithm Using the Parent-Distance Tables}

Algorithm~\ref{algo:parent}, below, is based on Lemmas~\ref{lm:lesser} to~\ref{lm:uniondisjointe}.
The function candidateSwapPosition at 
 line~\ref{algo:parent:linei} computes the exact location of the swap, using the green zones\footnote{detailed in the Appendix}.
There can be up to two candidates for the position $i$ of the swap but only one can be valid according to Lemma~\ref{lm:uniondisjointe}. 

\begin{algorithm}\label{algo:parent}
  \caption{$DoubleParentDistanceMethod(p, t)$\label{algo:pds}}
  \SetAlgoLined
  \SetKwInOut{KwIn}{Input}
  \SetKwInOut{KwOut}{Output}
  \KwIn{Two sequences $p$ and $t$}
  \KwOut{The number of positions $j$ such that $t[j\ldots j+p-1] \tauCTM p$}
  $occ \leftarrow 0$\;
  $(\PDD_p, \PDG_p) \leftarrow$ Compute the parent-distance tables of $p$\;
  \For{$j \in \{1, \ldots, |t|-|p|+1\}$}{
    $(\PDD_x, \PDG_x) \leftarrow$ Compute the parent-distance tables of $x=t[j\ldots j+p-1]$\;  
    \If{$\PDD_p = \PDD_x$}{
        $occ \leftarrow occ +1$\;
    }
    \Else{
    \ForEach{$i \in candidateSwapPosition(\PDD_p, \PDG_p, \PDD_x, \PDG_x)$\label{algo:parent:linei}}{
            \If{Lemmas~\ref{lm:lesser},~\ref{lm:greater},~\ref{lm:blue} and~\ref{lm:red} hold for $p, x$ and $i$}{
                $occ \leftarrow occ +1$\;
            }
        }
    }
  }
  \Return{$occ$}\;
\end{algorithm}

\begin{theorem}
Given two sequences $p$ and $t$ of length $m$ and $n$, Algorithm~\ref{algo:pds}
 has a $\Theta(mn)$ worst-case time complexity
 and a $\Theta(m)$ space complexity.
\end{theorem}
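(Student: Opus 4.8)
The plan is to analyze Algorithm~\ref{algo:pds} directly, establishing matching upper and lower bounds on both time and space. For the \emph{time} complexity, I would first observe that the main loop runs $\Theta(n-m+1) = \Theta(n)$ times (assuming $m \le n$, so there is at least one window). Inside each iteration, the dominant cost is computing the parent-distance tables $\PDD_x$ and $\PDG_x$ of the current window $x = t[j\ldots j+m-1]$, each of length $m$. The standard construction of a parent-distance table (as in Park \textit{et al.}~\cite{PALP19}) runs in $\Theta(m)$ time using an amortized stack-based scan, and computing $\PDG_x$ is symmetric. The comparison $\PDD_p = \PDD_x$ is $O(m)$, so the ``if'' branch is $\Theta(m)$. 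The harder part to bound is the ``else'' branch: I would argue that \texttt{candidateSwapPosition} produces at most two candidate positions $i$ (as stated in the text, justified by the green-zone analysis of Lemma~\ref{lm:green}), and for each candidate, verifying that Lemmas~\ref{lm:lesser},~\ref{lm:greater},~\ref{lm:blue} and~\ref{lm:red} hold amounts to checking a constant number of local values (items of Lemmas~\ref{lm:lesser}/\ref{lm:greater}) plus scanning the red and blue zones, which are contained in the length-$m$ tables and thus cost $O(m)$ each. Since there are at most two candidates, the else branch is also $O(m)$. Hence each iteration is $\Theta(m)$ and the total time is $\Theta(mn)$.

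For the \emph{lower bound} on time, I would note that $\Omega(mn)$ is forced because, in the worst case, every window requires building its parent-distance tables of length $m$ from scratch, and no sublinear-per-window shortcut is used by the algorithm; one can exhibit an input (for example a text and pattern where no early-exit comparison terminates quickly) on which $\Theta(m)$ work is genuinely performed in each of the $\Theta(n)$ iterations. Combined with the $O(mn)$ upper bound, this gives the tight $\Theta(mn)$ bound.

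For the \emph{space} complexity, the key point is that the algorithm only ever stores the four parent-distance tables $\PDD_p, \PDG_p, \PDD_x, \PDG_x$, each of length $m$, together with the $O(1)$ counter $occ$, the loop index, and the constant number of candidate positions returned by \texttt{candidateSwapPosition}. Crucially, the window tables $\PDD_x$ and $\PDG_x$ are recomputed and overwritten at each iteration rather than accumulated, so the working space does not grow with $n$. This yields $O(m)$ space, and the lower bound $\Omega(m)$ follows since storing a single parent-distance table of the pattern already requires $\Theta(m)$ space, giving $\Theta(m)$ overall.

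The main obstacle I anticipate is the careful justification of the $O(m)$ cost of the else branch: specifically, arguing that \texttt{candidateSwapPosition} returns only a constant number of candidates and that the red/blue-zone verifications for each candidate can be carried out in linear time. This requires leaning on the structural results of Section~\ref{sec:charac} (the green zones pin down the swap location up to two candidates, and the red and blue zones partition the remaining positions into regions checkable in a single pass). Once that per-iteration bound is nailed down, the aggregation into $\Theta(mn)$ time and $\Theta(m)$ space is routine.
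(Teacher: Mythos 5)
Your proof is correct and follows essentially the same route as the paper: $\Theta(m)$ work per window (parent-distance table construction plus zone checks over at most two candidate swap positions) times $\Theta(n)$ windows, and $\Theta(m)$ space for the constant number of length-$m$ tables. The only real difference is in the lower bound: the paper exhibits a concrete worst-case instance (a comb-shaped text and a pattern obtained by a swap on a comb) to show the zone checks genuinely cost $\Theta(m)$, whereas you observe that the algorithm unconditionally recomputes the window's tables at every iteration, which already forces $\Omega(mn)$ and is, if anything, a more immediate justification.
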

\begin{proof}
The space complexity is obtained from the length of the parent-distance tables.
Regarding the time complexity, in the worst-case scenario, one has to check the zones of each parent-distance
table, meaning that the number of comparisons is bounded by $2m$ for each position
in the sequence $t$.
This worst case can be reached by taking a sequence $t$ whose associated Cartesian tree is a comb and a sequence $p$ obtained by doing a swap on a sequence associated to a comb. 
\end{proof}

\subsection{An Aho-Corasick Based Algorithm}

The idea of the second method is to take advantage of the upper bound on
the size of the neighborhood of a given Cartesian tree in the Swap Graph.
Given a sequence $p$, we compute the set of its neighbors $ng(C(p))$,
then we compute the set of all parent-distance tables and build the 
automaton that recognizes this set of tables using the Aho-Corasick method
 for multiple Cartesian tree matching~\cite{PALP19}.
Then, for each position in a sequence $t$, it is sufficient to read
the parent-distance table of each factor $t[j\ldots j+|p|-1]$ into the automaton
and check if it ends on a final state.
The following theorem can be obtained from Section $4.2$ in~\cite{PALP19}.
\begin{theorem}
Given two sequences $p$ and $t$ of length $m$ and $n$, the Aho-Corasick based
 algorithm has an $\mathcal{O}((m^2 + n)\log{m} )$ worst-case time complexity
 and an $\mathcal{O}(m^2)$ space complexity.
\end{theorem}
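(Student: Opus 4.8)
The plan is to reduce $\tauCTM$ matching to an instance of multiple (exact) Cartesian tree matching and then invoke the Aho--Corasick based solution of Park \textit{et al.}~\cite{PALP19}, so that the whole argument reduces to bounding the number and the total size of the patterns handed to that black box. By Definition~\ref{ct-tau}, a factor $\tauCTM$-matches $p$ if and only if its Cartesian tree is $C(p)$ or one of the neighbors of $C(p)$ in the Swap Graph. Hence the set of trees to recognize is exactly $\{C(p)\} \cup ng(C(p))$, and the algorithm's correctness is immediate once we show it recognizes precisely this set.

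First I would bound the cardinality of this set. By Lemma~\ref{lm:neighbor} we have $|ng(C(p))| \le \lceil 3(m-1)-2(\log_2(m+1)-1)\rceil = \mathcal{O}(m)$, so together with $C(p)$ itself there are $\mathcal{O}(m)$ target trees. Because the parent-distance representation is a one-to-one encoding of Cartesian trees, each target tree corresponds to exactly one parent-distance table of length $m$, and distinctness of these tables across swap positions is guaranteed by Lemma~\ref{lm:uniondisjointe}. Consequently the multiple-pattern instance consists of $\mathcal{O}(m)$ patterns of total length $\mathcal{O}(m^2)$. Storing these tables together with the automaton, whose size is linear in the total pattern length, yields the $\mathcal{O}(m^2)$ space bound.

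For the time bound I would account for three phases. Generating the $\mathcal{O}(m)$ neighbor tables, each of length $m$, stays within the preprocessing budget: each table is obtained from $\PDD_p$ and $\PDG_p$ by the local modifications described in Lemmas~\ref{lm:lesser}--\ref{lm:red}. Building the Aho--Corasick automaton over the alphabet of parent-distance values in $\{0,\ldots,m-1\}$ costs $\mathcal{O}(m^2\log m)$, the $\log m$ factor coming from ordered transition lookups on an alphabet of size $\mathcal{O}(m)$, as in Section~4.2 of~\cite{PALP19}. Finally, the search phase feeds the parent-distance table of each length-$m$ window of $t$ into the automaton; following~\cite{PALP19}, these parent-distances are maintained incrementally in a single left-to-right scan of $t$, so only $\mathcal{O}(n)$ automaton transitions (amortizing failure links) are performed, each at cost $\mathcal{O}(\log m)$, for a total of $\mathcal{O}(n\log m)$. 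Summing gives $\mathcal{O}(m^2\log m + n\log m) = \mathcal{O}((m^2+n)\log m)$.

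The main obstacle is not the counting but the correct reuse of the multiple-pattern machinery: in Cartesian tree matching the symbol consumed at a text position is a parent-distance that depends on the current window rather than a fixed character, so the parent-distance table of a window is not simply a factor of $\PDD_t$. I must therefore argue that feeding the automaton the incrementally maintained parent-distances of $t$ both preserves the $\mathcal{O}(\log m)$-per-position transition cost and keeps the failure function consistent. This is precisely what Section~4.2 of~\cite{PALP19} establishes for exact multiple Cartesian tree matching, and since our reduction only changes the \emph{set} of patterns and not the matching model, that analysis applies verbatim; combined with the one-to-one property of parent-distance tables, reaching a final state is then equivalent to a $\tauCTM$ match.
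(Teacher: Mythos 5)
Your proposal is correct and follows essentially the same route as the paper, which simply constructs the set $\{C(p)\}\cup ng(C(p))$ of $\mathcal{O}(m)$ target trees (via Lemma~\ref{lm:neighbor}), builds the Aho--Corasick automaton on their parent-distance tables, and cites Section~4.2 of~\cite{PALP19} for the $\mathcal{O}((m^2+n)\log m)$ time and $\mathcal{O}(m^2)$ space bounds. Your write-up is in fact more detailed than the paper's, which states the result as an immediate consequence of that reference without further argument.
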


\section{Perspectives \label{sec:persp}}

From the pattern matching point of view, the first step would be to generalize
our result to sequences with a partial order instead of a total one.
Then, it could be interesting to obtain a general method, where the number
of swaps is given as a parameter. Though, we fear that if too many
swaps are applied, the result loses its interest, even though the complexity 
might grow rapidly.

The analysis of both algorithms should be improved.
An amortized analysis could be done on the first algorithm and on the computation
of the set of parent-distance tables for the second algorithm.
In the second algorithm the upper bounds on the time and space complexity 
could be improved by studying the size of the minimal automaton.

\bibliographystyle{splncs04}
\bibliography{biblio}

\begin{thebibliography}{10}
\providecommand{\url}[1]{\texttt{#1}}
\providecommand{\urlprefix}{URL }
\providecommand{\doi}[1]{https://doi.org/#1}

\bibitem{AC75}
Aho, A.V., Corasick, M.J.: Efficient string matching: an aid to bibliographic search. Commun. ACM  \textbf{18}(6),  333--340 (1975)

\bibitem{AmirALLL97}
Amir, A., Aumann, Y., Landau, G.M., Lewenstein, M., Lewenstein, N.: Pattern matching with swaps. In: 38th Annual Symposium on Foundations of Computer Science, {FOCS} '97, Miami Beach, Florida, USA, October 19-22, 1997. pp. 144--153. {IEEE} Computer Society (1997)

\bibitem{CROCHEMORE20201}
Crochemore, M., Russo, L.M.: Cartesian and {L}yndon trees. Theoret. Comput. Sci.  \textbf{806}, ~1--9 (2020)

\bibitem{demaine09}
Demaine, E.D., Landau, G.M., Weimann, O.: On {C}artesian trees and {R}ange {M}inimum {Q}ueries. Algorithmica  \textbf{68}(3),  610--625 (2014)

\bibitem{FL2013}
Faro, S., Lecroq, T.: The exact online string matching problem: a review of the most recent results. ACM Comput. Surv.  \textbf{45}(2), ~13 (2013)

\bibitem{FLPS22}
Faro, S., Lecroq, T., Park, K., Scafiti, S.: On the longest common {C}artesian substring problem. Comput. J.  \textbf{66}(4),  907--923 (2023)

\bibitem{FaroP18}
Faro, S., Pavone, A.: An efficient skip-search approach to swap matching. Comput. J.  \textbf{61}(9),  1351--1360 (2018)

\bibitem{KC2021}
Kim, S.H., Cho, H.G.: A compact index for {C}artesian tree matching. In: CPM. pp. 18:1--18:19. Wrocław, Poland (2021)

\bibitem{KMP77}
Knuth, D.E., Morris, Jr, J.H., Pratt, V.R.: Fast pattern matching in strings. SIAM J. Comput.  \textbf{6}(1),  323--350 (1977)

\bibitem{NFNI2021}
Nishimoto, A., Fujisato, N., Nakashima, Y., Inenaga, S.: Position heaps for {C}artesian-tree matching on strings and tries. In: SPIRE. pp. 241--254. Lille, France (2021)

\bibitem{OizumiKMIA22}
Oizumi, T., Kai, T., Mieno, T., Inenaga, S., Arimura, H.: Cartesian tree subsequence matching. In: Bannai, H., Holub, J. (eds.) CPM. LIPIcs, vol.~223, pp. 14:1--14:18. Schloss Dagstuhl - Leibniz-Zentrum f{\"{u}}r Informatik, Prague, Czech Republic (2022)

\bibitem{PALP19}
Park, S., Amir, A., Landau, G., Park, K.: Cartesian tree matching and indexing. In: CPM. vol.~16, pp. 1--14. Pisa, Italy (2019)

\bibitem{PARK2020}
Park, S.G., Bataa, M., Amir, A., Landau, G.M., Park, K.: Finding patterns and periods in {C}artesian tree matching. Theoret. Comput. Sci.  \textbf{845},  181--197 (2020)

\bibitem{SB14}
Shun, J., Blelloch, G.E.: A simple parallel {C}artesian tree algorithm and its application to parallel suffix tree construction. ACM Trans. Parallel Comput.  \textbf{1}(1), ~20 (2014)

\bibitem{SGRFLP21}
Song, S., Gu, G., Ryu, C., Faro, S., Lecroq, T., Park, K.: Fast algorithms for single and multiple pattern {C}artesian tree matching. Theoret. Comput. Sci.  \textbf{849},  47--63 (2021)

\bibitem{vuillemin80}
Vuillemin, J.: A unifying look at data structures. Commun. ACM  \textbf{23}(4),  229–239 (1980)

\end{thebibliography}

\newpage
\appendix
\section{Appendix \label{sec:app}}
In this section we show how to compute the position of the swap thanks to the green zones as mentioned in line~\ref{algo:parent:linei} of algorithm~\ref{algo:pds} and we provide proofs and additional context for Lemmas~\ref{lm:green},~\ref{lm:blue} and~\ref{lm:red}.

\subsection{Computing the Position of the Swap with the Green Zones}

We compute the candidate positions $i$ for the possible swap.
If the parent-distances are equal, then both sequences trivially $CT$ match.
Otherwise, the idea is to rely on a ``pincer movement'' using the green zones. According to Lemma~\ref{lm:green}, the green zones of
$\PDD_x$ and $\PDD_y$ (resp. $\PDG_x$ and $\PDG_y$) are equal. From Lemmas~\ref{lm:lesser} and~\ref{lm:greater}, we also deduce that $\bD \neq \dD$ and $\bG \neq \dG$. 
Unfortunately, no hard condition can be set on the $a$'s, and as such, we might run into four different cases,
depending on whether $\aD$ equals $\cD$ and $\aG$ equals $\cG$. Fig.~\ref{fig:indistinguability} presents all four possible cases.

\begin{figure}[h!]
    \centering
    \includegraphics[scale=0.55]{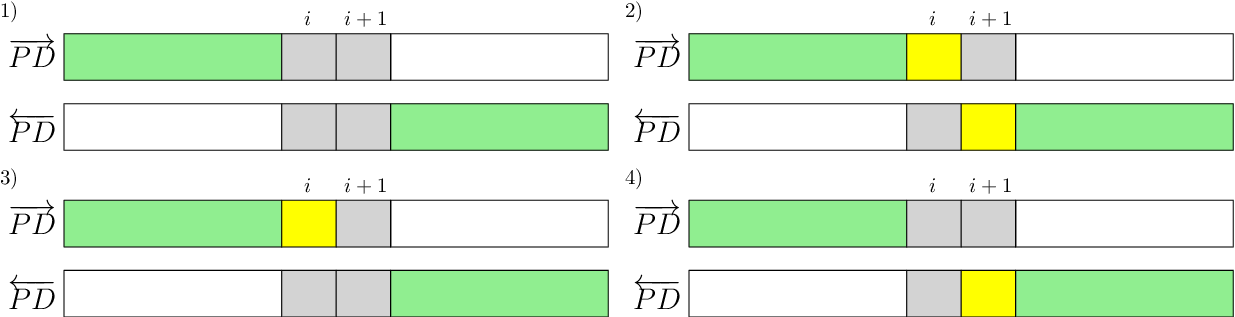}
    \caption{The parent-distance tables of sequences $x$ and $y$ are merged into one. If a zone is colored either in green or yellow, then the tables match, in grey if they do not and white when it is unknown.}
    \label{fig:indistinguability}
\end{figure}

In case 1, there is a gap of length 2 between the green/yellow zones, and we can immediately deduce that the positions of this gap are the only eligible positions for the swapped elements. In case 2, the gap is reduced to 0, but we can once again pinpoint the position $i$ with total accuracy. Lastly, in cases 3 and 4, there is a gap of 1,
and as we currently have no knowledge of the position of the swap, both cases end up indistinguishable,
meaning that we must test the positions on the left of the gap and of the gap itself. From Lemma~\ref{lm:uniondisjointe}, we have that only one such position 
might be a swap, as two swaps on the same sequence would produce different Cartesian trees and thus different parent-distance tables.
Any gap larger than 2 immediately disqualifies the sequences from $CT_\tau$ matching.

\subsection{Further Illustrations and Proofs of Lemmas~\ref{lm:green},~\ref{lm:blue} and~\ref{lm:red}}

The idea behind the blue and red zones is to use both parent-distance tables to isolate another part of the 
parent-distance tables that is not affected by the swap when the latter happens in a subtree of the Cartesian tree.
We observe that the smallest element involved in the swap is always the root of the subtree where the swap happens, with
the other element being either the leftmost element of the root's right subtree, or the rightmost element of the root's 
left subtree. We then consult both parent-distances of the smallest element to locate the last and first elements of the
blue zones, if those blue zones exist. 

\begin{figure}[h!]
\begin{minipage}[t]{0.49\textwidth}
\vspace{0pt}
\includegraphics[scale=0.35]{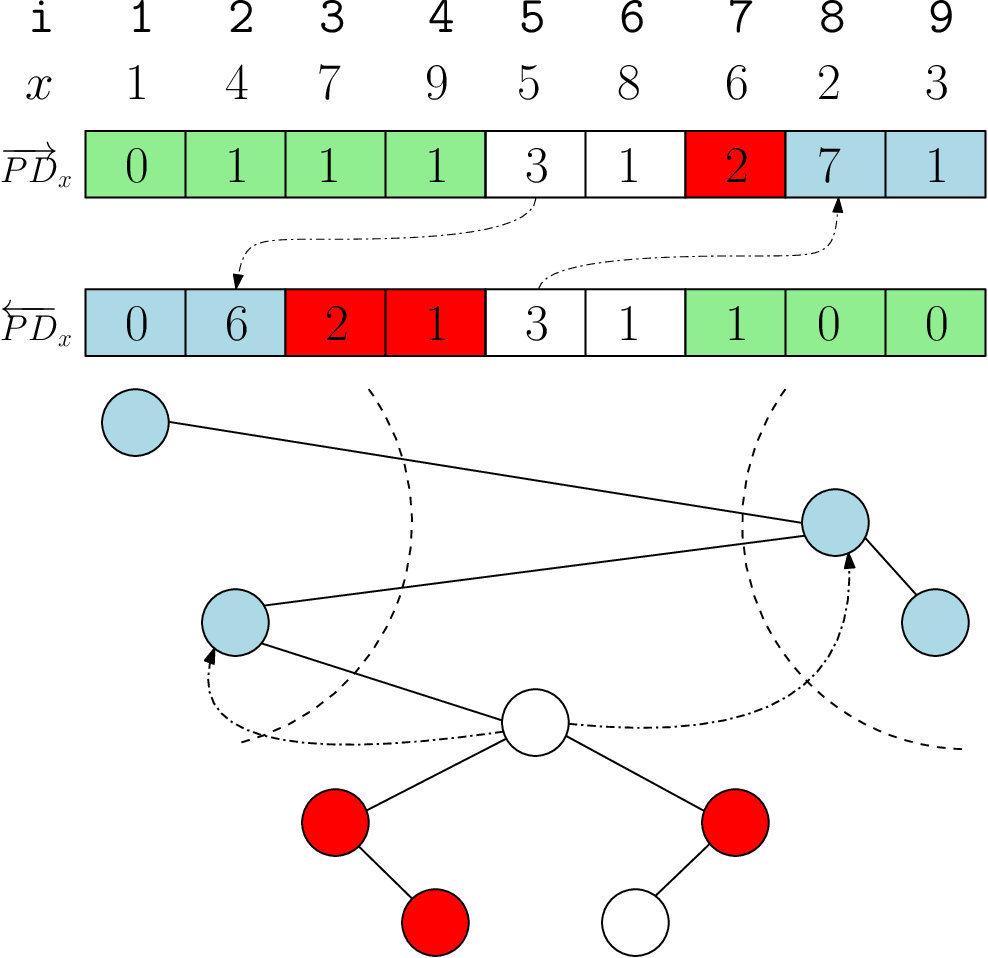}
\end{minipage}
\hfill
\begin{minipage}[t]{0.49\textwidth}
\vspace{0pt}
\includegraphics[scale=0.35]{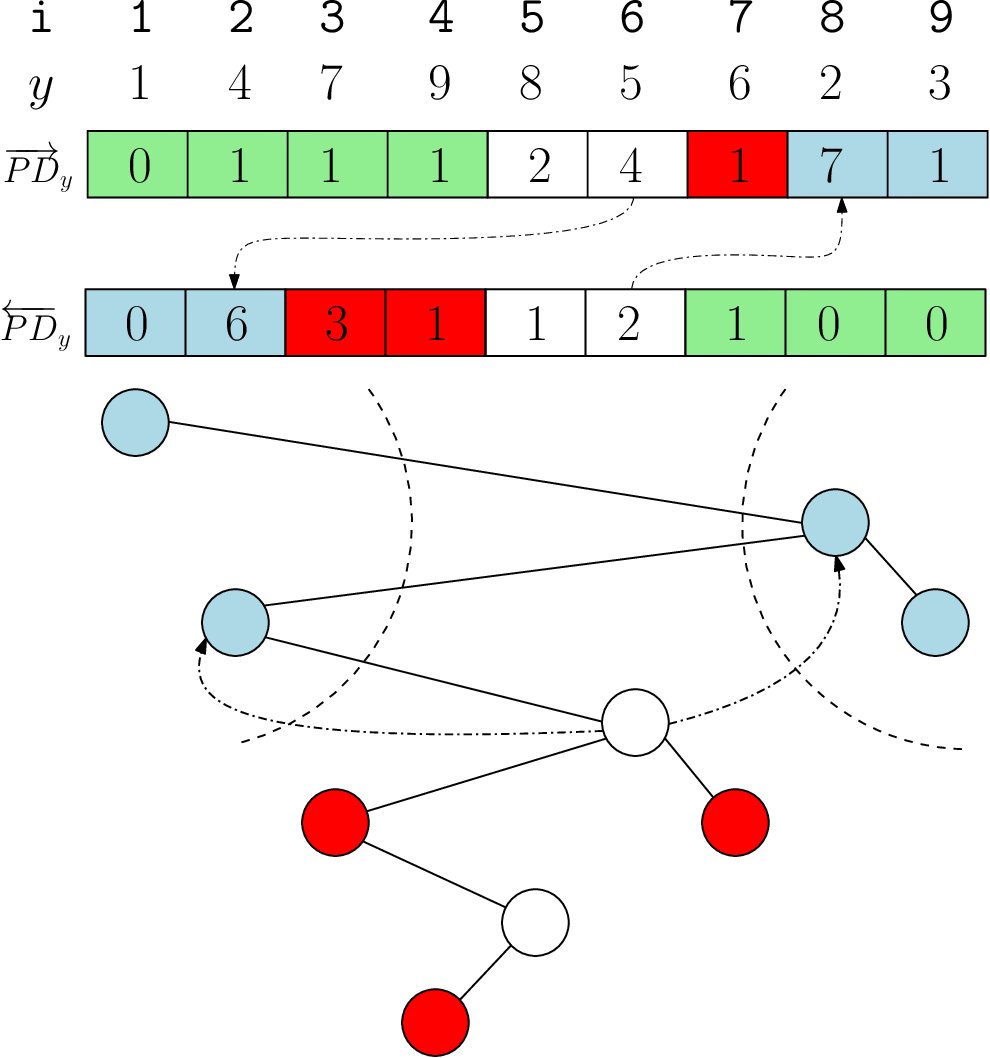}
\end{minipage}
\caption{In this figure, swaps are applied at position \texttt{5} on both $x$ and $y$.
As can be seen on the left part of the Figure, $x[\texttt{5}] < x[\texttt{6}]$, $\ell=\PDD_x[\texttt{5}]$ and $r=\PDG_x[\texttt{5}]$ gives us the position $\texttt{5}+r$ and $\texttt{5}-\ell$ of the first values that are smaller than $x[\texttt{5}]$ and, by extension, smaller than any value between $\texttt{5}-\ell$ and $\texttt{5}+\ell$. Therefore any position smaller than $\texttt{5}-\ell$ in $\PDG_x$ is unaffected by the swap. The same
goes for any position greater than $\texttt{5}+r$ in $\PDD_x$.
On the right part of the figure, we have $y[\texttt{5}] > y[\texttt{6}]$, $\ell=\PDD_y[\texttt{6}]-1$ and $r=\PDG_y[\texttt{6}]+1$.\label{fig:blue_red_appendix}}
\end{figure}

\begin{lemma}\label{lm:propzones}
    Let $x$ and $y$ be two sequences of length $n$ 
    such that $\exists\ i\in \{1, \ldots, n-1\}$ and $y = \tau(x,i)$.
    For all $j\in \{1, \ldots, n\} \setminus \{i,i+1\}$:
    \begin{itemize}
        \item if $\PDD_x[j] \notin \{ j-i-1,j-i\}$, then $\PDD_x[j] = \PDD_y[j]$.
        \item if $\PDG_x[j] \notin \{ i-j, i+1-j\}$, then $\PDG_x[j] = \PDG_y[j]$.
    \end{itemize}        
\end{lemma}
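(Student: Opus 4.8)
The plan is to isolate the single structural fact that governs everything: if $y=\tau(x,i)$ then every entry is fixed except at positions $i$ and $i+1$, where the swap merely transposes two values. Thus $y[j]=x[j]$ for all $j\notin\{i,i+1\}$, and, crucially, the unordered pair $\{y[i],y[i+1]\}=\{x[i],x[i+1]\}$ is preserved. The whole argument will rest on this set preservation: any comparison that only asks whether the two moving values are larger or smaller than a reference value $x[j]$ (and not which of the two sits where) is insensitive to the swap.

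For the parent-distance statement I would first translate the hypothesis into a statement about the parent position. When $\PDD_x[j]>0$, write $k_x=j-\PDD_x[j]$ for the nearest position to the left of $j$ carrying a smaller value. Since $\PDD_x[j]=j-i \Leftrightarrow k_x=i$ and $\PDD_x[j]=j-i-1 \Leftrightarrow k_x=i+1$, the excluded set $\{j-i-1,j-i\}$ corresponds exactly to $k_x\in\{i,i+1\}$; so assuming $\PDD_x[j]\notin\{j-i-1,j-i\}$ forces $k_x\notin\{i,i+1\}$ (or else $\PDD_x[j]=0$). I would then split on the location of $j$. If $j<i$, every entry of $x[1\ldots j]$ is untouched by the swap, so $\PDD_y[j]=\PDD_x[j]$ trivially, and the hypothesis holds automatically because $j-i-1,j-i<0\le\PDD_x[j]$. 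If $j>i+1$, I would do a short case analysis on $k_x$: when $k_x<i$ both positions $i,i+1$ lie strictly between $k_x$ and $j$, hence carry values $>x[j]$ before and, by set preservation, after the swap, so $k_x$ remains the parent; when $k_x>i+1$ the two moving positions lie to the left of $k_x$ and are never reached by the nearest-smaller-to-the-left search, so again the parent is $k_x$; and when $\PDD_x[j]=0$, set preservation keeps $y[j]$ minimal on $y[1\ldots j]$, giving $\PDD_y[j]=0$. In every surviving case $\PDD_y[j]=\PDD_x[j]$.

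The reverse parent-distance statement is the mirror image obtained by reversing the sequence: $\PDG$ performs the same nearest-smaller search to the right, the excluded set $\{i-j,i+1-j\}$ corresponds to the nearest smaller neighbour lying at position $i$ or $i+1$, and the three cases (now trivial for $j>i+1$ and requiring the split for $j<i$) yield $\PDG_y[j]=\PDG_x[j]$. I would simply invoke symmetry here rather than reproduce the computation.

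The only delicate point, and the step I expect to be the real obstacle to phrase cleanly, is ruling out that the swap creates a \emph{new}, closer smaller element between the parent and $j$ (or invalidates the old parent). This is exactly where set preservation does the work: the two values that move are precisely $\{x[i],x[i+1]\}$, and in each non-excluded case both of them are already known to exceed $x[j]$ when they lie inside the relevant search window, or else lie entirely outside it, so transposing them changes nothing. I would make this ``both values are larger, regardless of order'' argument fully explicit and verify that it still holds at the boundary indices $k_x=i-1$ and $k_x=i+2$, which sit adjacent to the excluded cases but fall into the covered branches.
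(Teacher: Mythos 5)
Your proof is correct and follows essentially the same route as the paper's: split on $j<i$ versus $j>i+1$, observe that one of the two parent-distances is trivially preserved because the relevant half of the sequence is untouched, and for the other use the fact that the swap preserves the multiset $\{x[i],x[i+1]\}$, so the window between $j$ and its nearest smaller neighbour either contains both swapped positions (both values already known to exceed $x[j]$) or neither. Your version is somewhat more explicit than the paper's (you spell out the case analysis on $k_x$ and the $\PDD_x[j]=0$ case, which the paper leaves implicit), but there is no substantive difference in approach.
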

\begin{proof}
    According to Def.~\ref{def:pd},~\ref{def:swap} and~\ref{rev-pd}, we have:
    \begin{itemize}
        \item for all $j<i$, $x[j] = y[j]$: 
        \begin{itemize}
            \item therefore $\PDD_x[j] = \PDD_y[j]$
            \item if $\PDG_x[j] \notin \{ i-j, i+1-j\}$, then by definition
            for all $k \in \{j+1, \ldots, j+\PDG_x[j] -1\}$, both $x[j]$ and $y[j]$ are 
            smaller than $x[k]$ and greater than $x[j+\PDG_x[j]]$ (which is equal to $y[j+\PDG_x[j]]$). Therefore $\PDG_x[j] = \PDG_y[j]$ 
        \end{itemize} 
        \item for all $j>i+1$, $x[j] = y[j]$:        
        \begin{itemize}
            \item therefore $\PDG_x[j] = \PDG_y[j]$
            \item if $\PDD_x[j] \notin \{ j-i-1, j-i\}$, then by definition for all 
            $k \in \{j-\PDD_x[j]+1, \ldots, j-1\}$, both $x[j]$ and $y[j]$ are 
            smaller than $x[k]$ and greater than $x[j-\PDD_x[j]]$ (which is equal to $y[j+\PDD_x[j]]$). Therefore $\PDD_x[j] = \PDD_y[j]$ 
        \end{itemize} 
    \end{itemize}
\end{proof}

\greenzones*
    
\begin{proof}
The proof directly follows from Def.~\ref{def:green} and Lemma~\ref{lm:propzones}.
\begin{comment}
According to Def.~\ref{def:pd} and~\ref{def:swap}, 
we have $x \tauCTM y$ implies $\PDD_x[1\ldots i-1] = \PDD_y[1\ldots i-1]$.
Similarly, reading the parent-distance from right to left, $x \tauCTM y$ implies $\PDG_x[i+1\ldots n] = \PDG_y[i+1\ldots n]$.
\end{comment}
\end{proof}

\bluezones*

%\begin{lemma}[The blue zones]
%The blue zones of $\PDD_x$ and $\PDD_y$ (resp. $\PDG_x$ and $\PDG_y$) are equal.
%\end{lemma}

\begin{proof}
Suppose $x[i] < x[i+1]$ (therefore $y[i] > y[i+1]$). 
According to Def.~\ref{def:blue}, the blue zones of $\PDD_x$ and $\PDD_y$ (resp. $\PDG_x$ and $\PDG_y$) are $\PDD_x[i+\bG\ldots n]$ and $\PDD_y[i+\cG +1\ldots n]$ (resp. $\PDG_x[1\ldots i - \aD]$  and $\PDG_y[1\ldots i - (\dD - 1)]$). 
From item~\ref{lm:lesser:item3}  of Lemma~\ref{lm:lesser}, we have $\bG = \cG+1$, meaning that the blue zones of $\PDD_x$ and $\PDD_y$ coincide with each other (resp. item~\ref{lm:lesser:item2} of Lemma~\ref{lm:lesser} for  $\PDG_x$ and $\PDG_y$).

Suppose $\bG > 1$, then there exists a position 
$r=\bG$ such that for all $j \in \{i, \ldots, i+r-1\}$, we have $x[j] > x[i+r]$. 
For each $k \in \{i+r, \ldots, n\}$, either $x[k] \le x[i+r]$, in which case $\PDD_x[k]$ and $\PDD_y[k]$ both point to the green zone and therefore did not change. Otherwise,
$\PDD_x[k]$ and $\PDD_y[k]$ point to at least position $i+r$ and are therefore equal.

If $x[i]<x[i+1]$ then by Def.~\ref{rev-pd} it holds that 
 $x[i+r]<x[i]$ and $x[j]>x[i]$ for all $j\in\{i+1,\ldots,i+r-1\}$.
Let $k\in\{i+r,\ldots,n\}$ be a position the blue zone of $\PDD_x$.
If $x[k]>x[i+r]$ then $\PDD_x[k]\le k-i-r<k-i$ and then
 by Lemma~\ref{lm:propzones} we have $\PDD_y[k]=\PDD_x[k]$.
If $x[k]<x[i+r]$ then $x[k]<x[j]$ for all $j\in\{i,\ldots,i+r\}$ and
 $\PDD_x[k]> k-i$ and then by Lemma~\ref{lm:propzones} we have $\PDD_y[k]=\PDD_x[k]$.

The other cases can be proved in a similar way.

\begin{comment}
    
The parent-distance of $x[i]$ refers to its parent in each respective parent-distance table (if such an element exists), and we can safely deduce that this element and any other elements before (resp. after) are not part of the subtree where the swap happens and are thus unaffected.
\end{comment}

The proof is similar for $x[i] > x[i+1]$.
\end{proof}

\redzones*

\begin{comment}
\begin{lemma}[The red zones]
We distinguish two symmetrical cases:
\begin{enumerate}
    \item $x[i] < x[i+1]$:
  \begin{enumerate}
    \item 
  For each position $j$ in the red zone of $\PDD_x$, we have
  either $\PDD_y[j] = \PDD_x[j]$ or $\PDD_y[j] = \PDD_x[j]-1$.
    \item 
  For each position $j$ in the red zone of $\PDG_x$, we have
  either $\PDG_y[j] = \PDG_x[j]$ or $\PDG_y[j] = \PDG_x[j]+1$.
  \end{enumerate}
    \item $x[i] > x[i+1]$:
  \begin{enumerate}
    \item 
  For each position $j$ in the red zone of $\PDD_x$, we have
  either $\PDD_y[j] = \PDD_x[j]$ or $\PDD_y[j] = \PDD_x[j]+1$.
    \item 
  For each position $j$ in the red zone of $\PDG_x$, we have
  either $\PDG_y[j] = \PDG_x[j]$ or $\PDG_y[j] = \PDG_x[j]-1$.
  \end{enumerate}
\end{enumerate}
\end{lemma}
\end{comment}

\begin{proof}

Let us prove item~\ref{lm:red:item1} when $x[i]<x[i+1]$.
Let $r=\bG$ and
 let $k\in\{i+2,\ldots,i+r-1\}$ be a position in the red zone of $\PDD_x$.

We will consider three cases depending on the value of 
$\PDD_x[k]$:

\begin{enumerate}
    \item If $\PDD_x[k]\notin\{k-i,k-i-1\}$ then by Lemma~\ref{lm:propzones} we have $\PDD_y[k]=\PDD_x[k]$.
    \item If $\PDD_x[k]=k-i$ it means that $x[i]<x[k]$ and
$x[j]>x[k]$
for all $j\in\{i+2,\ldots,k-1\}$.
The same holds for $y$: $y[j]>y[k]$ for all $j\in\{i+2,\ldots,k-1\}$ and $x[i]=y[i+1]<x[k]$
 thus $\PDD_y[k]=\PDD_x[k]-1$.
    \item If $\PDD_x[k]=k-i-1$ it means that $x[i+1]<x[k]$ and $x[j]>x[k]$ for all $j\in\{i+2,\ldots,k-1\}$.
The same holds for $y$: $y[j]>y[k]$ for all $j\in\{i+2,\ldots,k-1\}$ and since $x[i]<x[i+1]<x[k]$
 and $x[i]=y[i+1]$ then $y[i+1] < x[k]$
 thus $\PDD_y[k]=k-i-1=\PDD_x[k]$.
\end{enumerate}

The other items can be proved in a similar way.

\end{proof}

\end{document}